\documentclass[a4paper, 11pt]{article}

\usepackage[utf8]{inputenc}
\usepackage[T1]{fontenc}
\usepackage[english]{babel}
\usepackage{authblk}
\usepackage{hyperref}  
\usepackage{amsfonts, amsthm, amsmath, amssymb} 
\usepackage{lineno}
\usepackage{old-arrows}  
\usepackage{array}  
\usepackage{subcaption}  
\usepackage{color}
\usepackage{pgfplots}
\pgfplotsset{compat=1.14}
\usepackage{stmaryrd}
\setlength{\marginparwidth}{2cm}
\usepackage[margin=1.1in, showframe=false]{geometry}  
\usepackage{float}
\usepackage{enumitem}  
\usepackage{pdfpages}
\usepackage{tcolorbox}  
\usepackage{comment}
\usepackage{graphicx}
\usepackage{graphics}
\usepackage{xspace}
\usepackage[vlined, ruled, dotocloa, english, onelanguage, linesnumbered]{algorithm2e}


\definecolor{asbestos}{RGB}{127, 140, 141}  
\definecolor{clouds}{RGB}{236, 240, 241}  
\definecolor{auroragreen}{RGB}{80, 184, 103}  
\definecolor{electron}{RGB}{9, 132, 227}  
\definecolor{matt}{RGB}{140, 122, 230}  

\hypersetup{
	colorlinks=true,
	citecolor=matt,
	linkcolor=asbestos
}


\theoremstyle{plain}
\newtheorem{theorem}{Theorem}  
\newtheorem*{theorem*}{Theorem}
\newtheorem{proposition}{Proposition}  
\newtheorem{corollary}{Corollary}  
\newtheorem{definition}{Definition}  

\theoremstyle{definition}

\theoremstyle{remark}
\newtheorem{example}{Example}  

\tcbuselibrary{many}

\newtcolorbox{problem}[1]{
	colframe=black,
        rightrule=0.6pt, leftrule=0.6pt, toprule=0.6pt, bottomrule=0.6pt,
	titlerule style=\color{black},
	colback=white,
	fonttitle=\color{black},
	arc=0mm,
	enhanced,
	attach boxed title to top left={xshift=0.5cm, yshift=-3.7mm},
	boxed title style={colframe=white},
	title=#1,
	colbacktitle=white,
}

\newtcolorbox{emphasize}{
	colback=clouds,
	colframe=clouds,
	arc=0mm,
	left=0mm,
	right=0mm,
	top=0mm,
	bottom=0mm,
	before skip=5mm,
	after skip=5mm	
}

\newcommand{\Problem}[3]{
	\begin{problem}{#1}
			\begin{tabular}{l p{0.7\textwidth}}
			\textit{Input: \hspace{0.7em}} & #2 \\ 
			\textit{Output: \hspace{0.7em}} & #3 \\	
		\end{tabular}
	\end{problem}
}


\newcommand{\cb}[1]{\mathbb{#1}}  
\newcommand{\csf}[1]{\textsf{#1}}  
\newcommand{\csmc}[1]{\textsc{#1}}  
\newcommand{\ctt}[1]{\texttt{\small{#1}}}  

\newcommand{\eg}{\emph{e.g.}\xspace}  
\newcommand{\ie}{\emph{i.e.}\xspace}  



\newcommand{\card}[1]{\vert #1 \vert}  
\renewcommand{\l}{\ell}  

\renewcommand{\max}{{\normalfont \csf{max}}}  


\newcommand{\G}{G}  
\newcommand{\V}{V}  
\newcommand{\E}{E}  
\newcommand{\CG}{\csf{CG}}  


\newcommand{\R}{R}  
\newcommand{\dom}{\csf{dom}}  
\newcommand{\simmodels}{\models_\Phi}
\newcommand{\imp}{\rightarrow}  
\newcommand{\Land}{\bigwedge}  


\newcommand{\NP}{\textbf{\csf{NP}}}  

\title{Functional Dependencies with Predicates: \texorpdfstring{\\}{} What Makes the 
\texorpdfstring{$g_3$-error}{g3-error} Easy to Compute?\footnote{Part of this work was done while the first author was doing a postdoc at LIRIS.}}
\author[1]{Simon Vilmin}
\author[2, 3]{Pierre Faure-{}-Giovagnoli}
\author[2]{Jean-Marc Petit}
\author[2]{Vasile-Marian Scuturici}

\affil[1]{Universit\'e de Lorraine, CNRS, LORIA, F-54000, France}
\affil[2]{Univ Lyon, INSA Lyon, CNRS, UCBL, LIRIS, UMR5205, France}
\affil[3]{Compagnie Nationale du Rh\^one, Lyon, France}

\begin{document}

\maketitle

\begin{abstract}
The notion of functional dependencies (FDs) can be used by data scientists and domain 
experts to confront background knowledge against data.
To overcome the classical, too restrictive, satisfaction of FDs, it is possible to 
replace equality with more 
meaningful binary predicates, and use a coverage measure such as the $g_3$-error to 
estimate the degree to which a FD matches the data.
It is known that the $g_3$-error can be computed in polynomial time if equality is used, 
but unfortunately, the problem becomes \NP-complete when relying on more general 
predicates instead.
However, there has been no analysis of which class of predicates or which properties 
alter the complexity of the problem, especially when going from equality to more general 
predicates.

In this work, we provide such an analysis.
We focus on the properties of commonly used predicates such as equality, similarity 
relations, and partial orders.
These properties are: reflexivity, transitivity, symmetry, and antisymmetry.
We show that symmetry and transitivity together are sufficient to guarantee that the 
$g_3$-error can be computed in polynomial time.
However, dropping either of them makes the problem \NP-complete.

\vspace{0.5em} 

\noindent \textbf{Keywords:} functional dependencies, \texorpdfstring{$g_3$-error}{g3-error}, predicates
\end{abstract}

\section{Introduction}

Functional dependencies (FDs) are database constraints initially devoted to database 
design \cite{mannila1992design}.
Since then, they have been used for numerous tasks ranging from data 
cleaning \cite{bohannon2007conditional} to data mining \cite{novelli2001functional}.
However, when dealing with real world data, FDs are also a simple yet powerful way to syntactically express background knowledge coming from domain experts 
\cite{faure2022assessing}.
More precisely, a FD $X \imp A$ between a set of attributes (or features) $X$ and another 
attribute $A$ depicts a \emph{function} of the form $f(X) = A$.
In this context, asserting the existence of a function which determines $A$ from $X$ in a 
dataset amounts to testing the validity of $X \imp A$ in a relation, \ie to checking that \emph{every pair} of tuples that are \emph{equal} on $X$ are also \emph{equal} on $A$.
Unfortunately, this semantics of satisfaction suffers from two major drawbacks which 
makes it inadequate to capture the complexity of real world data: (i) it must be checked 
on the whole dataset, and (ii) it uses equality.

Drawback (i) does not take into account data quality issues such as outliers, 
mismeasurements or mistakes, which should not impact the relevance of a FD in the data. 
To tackle this problem, it is customary to estimate the partial validity of a given FD 
with a \emph{coverage} measure, rather than its total satisfaction.
The most common of these measures is the $g_3$-error \cite{CormodeCFD,golab2008generating,huhtala1999tane,song2013comparable}, introduced by Kivinen and 
Mannila  \cite{kivinen1995approximate}.
It is the minimum proportion of tuples to remove from a relation in order to satisfy a 
given FD.
As shown for instance by Huhtala et al. \cite{huhtala1999tane}, the $g_3$-error can be 
computed in polynomial time for a single (classical) FD.

As for drawback (ii), equality does not always witness efficiently the 
closeness of two real-world values.
It screens imprecisions and uncertainties that are inherent to every observation.
In order to handle closeness (or difference) in a more appropriate way, numerous 
researches have replaced equality by \emph{binary predicates}, as witnessed by 
recent surveys on relaxed FDs \cite{caruccio2015relaxed,song10tree}.

However, if predicates extend FDs in a powerful and meaningful way 
with respect to real-world applications, they also make computations harder.
In fact, contrary to strict equality, computing the $g_3$-error with binary predicates 
becomes \NP-complete \cite{faure2022assessing,song2013comparable}.
In particular, it has been proven for differential 
\cite{song2010data}, matching \cite{fan2008dependencies}, metric \cite{koudas2009metric}, 
neighborhood \cite{bassee2001neighborhood}, and comparable dependencies 
\cite{song2013comparable}.
Still, there is no detailed analysis of what makes the $g_3$-error hard to compute when 
dropping equality for more flexible predicates.
As a consequence, domain experts are left without any insights on which predicates they 
can use in order to estimate the validity of their background knowledge in their data quickly and efficiently.

This last problem constitutes the motivation for our contribution.
In this work, we study the following question: \emph{which properties of 
predicates make the $g_3$-error easy to compute?}
To do so, we introduce binary predicates on each attribute of a relation scheme.
Binary predicates take two values as input and return \ctt{true} or \ctt{false} depending on whether the values match a given comparison criteria.
Predicates are a convenient framework to study the impact of common properties such as 
reflexivity, transitivity, symmetry, and antisymmetry (the properties of equality) on the 
hardness of computing the $g_3$-error.
In this setting, we make the following contributions.
First, we show that dropping reflexivity and antisymmetry does not make the $g_3$-error 
hard to compute.
When removing transitivity, the problem becomes \NP-complete.
This result is intuitive as transitivity plays a crucial role in the computation of the 
$g_3$-error for dependencies based on similarity/distance relations 
\cite{caruccio2015relaxed,song10tree}.
Second, we focus on symmetry.
Symmetry has attracted less attention, despite its importance in partial orders and order 
FDs \cite{dong1982applying,ginsburg1983order,ng2001extension}.
Even though symmetry seems to have less impact than transitivity in the computation of the $g_3$-error, we show that when it is removed the problem also becomes 
\NP-complete.
This result holds in particular for ordered dependencies.

\textbf{Paper Organization.} In Section \ref{sec:preliminaries}, we recall some 
preliminary definitions.
Section \ref{sec:error} is devoted to the usual $g_3$-error.
In Section \ref{sec:predicates}, we introduce predicates, along with definitions for the
relaxed satisfaction of a functional dependency.
Section \ref{sec:validation} investigates the problem of computing the $g_3$-error when 
equality is replaced by predicates on each attribute.
In Section \ref{sec:discussions} we relate our results with existing extensions of 
FDs.
We conclude in Section \ref{sec:conclusion} with some remarks and open questions for 
further research.
\section{Preliminaries}
\label{sec:preliminaries}

All the objects we consider are finite.
We begin with some definitions on graphs \cite{berge1973graphs} and ordered sets 
\cite{davey2002introduction}. 
A \emph{graph} $\G$ is a pair $(\V, \E)$ where $\V$ is a set of \emph{vertices} and $\E$ is a collection of pairs of vertices called \emph{edges}.
An edge of the form $(u, u)$ is called a \emph{loop}.
The graph $\G$ is \emph{directed} if edges are ordered pairs of elements.
Unless otherwise stated, we consider \emph{loopless undirected} graphs.
Let $\G = (\V, \E)$ be an undirected graph, and let $V' \subseteq \V$.
The graph $G[V'] = (V', E')$ with $E' = \{(u, v) \in \E \mid \{u, v\} \subseteq V' \}$ is the graph \emph{induced} by $V'$ with respect to $\G$.
A \emph{path} in $\G$ is a sequence $e_1, \dots, e_m$ of pairwise distinct edges such that $e_i$ and $e_{i+1}$ share a common vertex for each $1 \leq i < m$.
The \emph{length} of a path is its number of edges.
An \emph{independent set} of $\G$ is a subset $I$ of $V$ such that no two vertices in $I$ are connected by an edge of $\G$.
An independent set is \emph{maximal} if it is inclusion-wise maximal among all independent sets.
It is \emph{maximum} if it is an independent set of maximal cardinality.
Dually, a \emph{clique} of $\G$ is a subset $K$ of $\V$ such that every pair of distinct vertices in $K$ are connected by an edge of $\G$.
A graph $\G$ is a \emph{co-graph} if it has no induced subgraph corresponding to a path of length 3 (called $P_4$).
A \emph{partially ordered set} or \emph{poset} is a pair $P = (\V, \leq)$ where $\V$ is a set and $\leq$ a reflexive, transitive, and antisymmetric binary relation.
The relation $\leq$ is called a \emph{partial order}.
If for every $x, y \in \V$, $x \leq y$ or $y \leq x$ holds, $\leq$ is a \emph{total order}.
A poset $P$ is associated to a directed graph $\G(P) = (\V, \E)$ where $(u_i, u_j) \in \E$ exactly when $u_i \neq u_j$ and $u_i \leq u_j$.
An undirected graph $\G = (\V, \E)$ is a \emph{comparability graph} if its edges can be directed so that the resulting directed graph corresponds to a poset.

We move to terminology from database theory 
 \cite{levene2012guided}.
We use capital first letters of the alphabet ($A$, $B$, $C$, ...) to denote attributes and capital last letters (..., $X$, $Y$, $Z$) for attribute sets.
Let $U$ be a universe of attributes, and $\R \subseteq U$ a relation scheme.
Each attribute $A$ in $\R$ takes value in a domain $\dom(A)$.
The domain of $\R$ is $\dom(\R) = \bigcup_{A\in \R} \dom(A)$.
Sometimes, especially in examples, we write a set as a concatenation of its elements (e.g. $AB$ corresponds to $\{A,B\}$).
A \emph{tuple} over $\R$ is a mapping $t \colon \R \to \dom(\R)$ such that $t(A) \in \dom(A)$ for every $A \in R$.
The \emph{projection} of a tuple $t$ on a subset $X$ of $\R$ is the restriction of $t$ to $X$, written $t[X]$.
We write $t[A]$ as a shortcut for $t[\{A\}]$.
A \emph{relation} $r$ over $\R$ is a finite set of tuples over $\R$.
A \emph{functional dependency} (FD) over $\R$ is an expression $X \imp A$ where $X \cup \{A\} \subseteq \R$.
Given a relation $r$ over $\R$, we say that $r$ \emph{satisfies} $X \imp A$, denoted by $r \models X \imp A$, if for every pair of tuples $(t_1, t_2)$ of $r$, $t_1[X] = 
t_2[X]$ implies $t_1[A] = t_2[A]$.
In case when $r$ does not satisfy $X \imp A$, we write $r \not\models X \imp A$.

\section{The \texorpdfstring{$g_3$}{g3}-error}
\label{sec:error}

This section introduces the $g_3$-error, along with its connection with independent sets in graphs through counterexamples and conflict-graphs \cite{bertossi2011database}.

Let $r$ be a relation over $\R$ and $X \imp A$ a functional dependency.
The \emph{$g_3$-error} quantifies the degree to which $X \imp A$ holds in $r$.
We write it as $g_3(r, X \imp A)$.
It was introduced by Kivinen and Mannila \cite{kivinen1995approximate}, and it is 
frequently used to estimate the partial validity of a FD in a dataset \cite{caruccio2015relaxed,CormodeCFD,faure2022assessing,huhtala1999tane}.
It is the minimum proportion of tuples to remove from $r$ to satisfy $X \imp A$, or more formally:

\begin{definition}
Let $\R$ be a relation scheme, $r$ a relation over $\R$ and $X \imp A$ a functional dependency over $\R$.
The \emph{$g_3$-error} of $X \imp A$ with respect to $r$, denoted by $g_3(r, X \imp A)$ is defined as:
    \[ 
        g_3(r, X \imp A) = 1 - \frac{\max(\{ \card{s} \mid s \subseteq r, s \models X \imp A \})}{\card{r}}
    \]
\end{definition}

In particular, if $r \models X \imp A$, we have $g_3(r, X \imp A) = 0$.
We refer to the problem of computing $g_3(r, X \imp A)$ as the \emph{error validation problem} \cite{caruccio2015relaxed,song2013comparable}.
Its decision version reads as follows:
\Problem{Error Validation Problem (EVP)}
{A relation $r$ over a relation scheme $\R$, a FD $X \imp A$, $k \in \cb{R}$.}
{\ctt{yes} if $g_3(r, X \imp A) \leq k$, \ctt{no} otherwise.}

It is known \cite{caruccio2015relaxed,faure2022assessing} that there is a strong relationship between this problem and the 
task of computing the size of a maximum independent set in a graph:
\Problem{Maximum Independent Set (MIS)}
{A graph $\G = (\V, \E)$, $k \in \cb{N}$.}
{\ctt{yes} if $\G$ has a maximal independent set $I$ such that $\card{I} \geq k$, \ctt{no} otherwise.}
To see the relationship between \csmc{EVP} and \csmc{MIS}, we need the notions of \emph{counterexample} and \emph{conflict-graph} \cite{bertossi2011database,faure2022assessing}.
A \emph{counterexample} to $X \imp A$ in $r$ is a pair  of tuples $(t_1, t_2)$ such that $t_1[X] = t_2[X]$ but $t_1[A] \neq t_2[A]$.
The \emph{conflict-graph} of $X \imp A$ with respect to $r$ is the graph $\CG(r, X \imp A) = (r, \E)$ where a (possibly ordered) pair of tuples $(t_1, t_2)$ in $r$ belongs to $\E$ when it is a counterexample to $X \imp A$ in $r$. 
An independent set of $\CG(r, X \imp A)$ is precisely a subrelation of $r$ which satisfies $X \imp A$.
Therefore, computing $g_3(r, X \imp A)$ reduces to finding the size of a maximum independent set in $\CG(r, X \imp A)$.
More precisely, $g_3(r, X \imp A) = 1 - \frac{\card{I}}{\card{r}}$ where $I$ is a maximum independent set of $\CG(r, X \imp A)$.

\begin{example} \label{ex:defs}
    Consider the relation scheme $\R = \{A, B, C, D\}$ with $\dom(\R) = \cb{N}$.
    Let $r$ be the relation over $\R$ on the left of Figure \ref{fig:defs}.
    It satisfies $BC \imp A$ but not $D \imp A$.
    Indeed, $(t_1, t_3)$ is a counterexample to $D \imp A$.
    The conflict-graph $\CG(r, D \imp A)$ is given on the right of Figure \ref{fig:defs}.
    For example,  $\{t_1, t_2, t_6\}$ is a maximum independent set of $\CG(r, D \imp A)$ of maximal size.
    We obtain:
    \[ 
    g_3(r, D \imp A) = 1 - \frac{\card{\{t_1, t_2, t_6\}}}{\card{r}} = 0.5
    \]
    In other words, we must remove half of the tuples of $r$ in order to satisfy $D \imp A$.
    \begin{figure}[ht!]
        \centering 
        \includegraphics[scale=1.0, page=1]{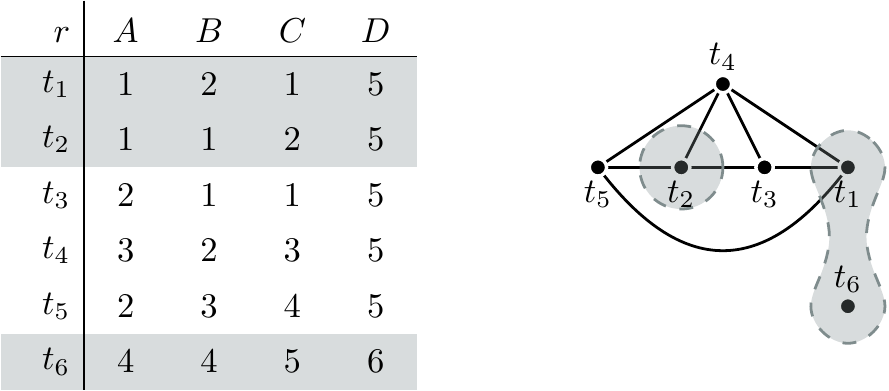}%
        \caption{The relation $r$ and the conflict-graph $\CG(r, D \imp A)$ of Example \ref{ex:defs}.}
        \label{fig:defs}
    \end{figure}
\end{example}

However, \csmc{MIS} is an \NP-complete problem \cite{garey1979computers} while computing $g_3(r, X \imp A)$ takes polynomial time in the size of $r$ and $X \imp A$ \cite{huhtala1999tane}.
This difference is due to the properties of equality, namely reflexivity, transitivity, symmetry and antisymmetry.
They make $\CG(r, X \imp A)$ a disjoint union of complete $k$-partite graphs, and hence a co-graph \cite{faure2022assessing}.
In this class of graphs, solving \csmc{MIS} is polynomial \cite{giakoumakis1997p_4}. 
This observation suggests to study in greater detail the impact of such properties on the structure of conflict-graphs.
First, we need to introduce predicates to relax equality, and to define a more general version of the error validation problem accordingly.
%
\section{Predicates to relax equality}
\label{sec:predicates}

In this section, in line with previous researches on extensions of functional dependencies \cite{song10tree,caruccio2015relaxed}, we equip each attribute of a relation scheme with a binary predicate.
We define the new $g_3$-error and the corresponding error validation problem.

Let $\R$ be a relation scheme.
For each $A \in \R$, let $\phi_A \colon \dom(A) \times \dom(A) \to \{\ctt{true}, \ctt{false}\}$ be a predicate.
For instance, the predicate $\phi_A$ can be equality, a distance, or a similarity relation.
We assume that predicates are black-box oracles that can be computed in polynomial time in the size of their input.

Let $\Phi$ be a set of predicates, one for each attribute in $\R$.
The pair $(\R, \Phi)$ is a \emph{relation scheme with predicates}.
In a relation scheme with predicates, relations and FDs are unchanged.
However, the way a relation satisfies (or not) a FD can easily be adapted 
to $\Phi$.

\begin{definition}[Satisfaction with predicates] \label{def:sat-with-predicates}
    Let $(\R, \Phi)$ be a relation scheme with predicates, $r$ a relation and $X \imp A$ a functional dependency both over $(\R, \Phi)$.
    The relation $r$ \emph{satisfies} $X \imp A$ with respect to $\Phi$, denoted by $r \models_{\Phi} X \imp A$, if for every pair of tuples $(t_1, t_2)$ of $r$, the following formula holds:
    \[ 
        \left(\bigwedge_{B \in X} \phi_B(t_1[B], t_2[B])\right) \implies \phi_A(t_1[A], t_2[A])
    \]
    %
\end{definition}

An new version of the $g_3$-error adapted to $\Phi$ is presented in the following definition.

\begin{definition}
Let $(\R, \Phi)$ be a relation scheme with predicates, $r$ be a relation over $(\R, \Phi)$ and $X \imp A$ a functional dependency over $(\R, \Phi)$.
The \emph{$g_3$-error with predicates} of $X \imp A$ with respect to $r$, denoted by $g_3^{\Phi}(r, X \imp A)$ is defined as:
\[ 
    g_3^{\Phi}(r, X \imp A) = 1 - \frac{\max(\{ \card{s} \mid s \subseteq r, s \simmodels X \imp A \})}{\card{r}}
\]
\end{definition}

From the definition of $g_3^{\Phi}(r, X \imp A)$, we derive the extension of the error validation problem from equality to predicates:
\Problem{Error Validation Problem with Predicates (EVPP)}
{A relation $r$ over a relation scheme with predicates $(\R, \Phi)$, a FD $X \imp A$ over $(\R, \Phi)$, $k \in \cb{R}$.}
{\ctt{yes} if $g_3^{\Phi}(r, X \imp A) \leq k$, \ctt{no} otherwise.} 
Observe that according to the definition of satisfaction with predicates (Definition 
\ref{def:sat-with-predicates}), counterexamples and conflict-graphs remain well-defined.
However, for a given predicate $\phi_A$, $\phi_A(x, y) = \phi_A(y, x)$ needs not be true 
in general, meaning that we have to consider ordered pairs of tuples. 
That is, an ordered pair of tuples $(t_1, t_2)$ in $r$ is a counterexample to $X \imp A$ 
if $\bigwedge_{B \in X} \phi_B(t_1[B], \allowbreak t_2[B]) = \ctt{true}$ but 
$\phi_A(t_1[A], 
\allowbreak t_2[A]) \neq \ctt{true}$.

We call $\CG_{\Phi}(r, X \imp A)$ the conflict-graph of $X \imp A$ in $r$.
In general, $\CG_{\Phi}(r, X \imp A)$ is directed. 
It is undirected if the predicates of $\Phi$ are symmetric (see Section 
\ref{sec:validation}).
In particular, computing $g_3^{\Phi}(r, X \imp A)$ still amounts to finding the size of a 
maximum independent set in $\CG_{\Phi}(r, X \imp A)$.

\begin{example} \label{ex:def-g3}
    We use the relation of Figure \ref{fig:defs}.
    Let $\Phi = \{\phi_A, \phi_B, \phi_C, \phi_D\}$ be the collection of predicates defined as follows, for every $x, y \in \cb{N}$:
    \begin{itemize}
        \item $\phi_A(x, y) = \phi_B(x, y) = \phi_C(x, y) = \ctt{true}$ if and only if $\card{x - y} \leq 1$. 
        Thus, $\phi_A$ is reflexive and symmetric but not transitive (see Section 
\ref{sec:validation}), 
        \item $\phi_D$ is the equality.
    \end{itemize}
    The pair $(R, \Phi)$ is a relation scheme with predicates.
    We have $r \simmodels AB \imp D$ but $r \not\simmodels C \imp A$.
    In Figure \ref{fig:ex-defs-predicates}, we depict $\CG_{\Phi}(r, C \imp A)$.
    A maximum independent set of this graph is $\{t_1, t_2, t_3, t_5\}$.
    We deduce
    \[ g_3^{\Phi}(r, C \imp A) = 1 - \frac{\card{\{t_1, t_2, t_3, 
    t_5\}}}{\card{r}} = \frac{1}{3} \]
    \begin{figure}[ht!]
        \centering 
        \includegraphics[page=2, scale=1.0]{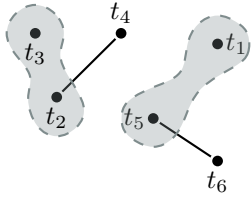}%
        \caption{The conflict-graph $\CG_{\Phi}(r, C \imp A)$ of Example \ref{ex:def-g3}.}
        \label{fig:ex-defs-predicates}
    \end{figure}
\end{example}

Thus, there is also a strong relationship between \csmc{EVPP} and \csmc{MIS}, similar to the one between \csmc{EVP} and \csmc{MIS}.
Nonetheless, unlike \csmc{EVP}, the problem \csmc{EVPP} is \NP-complete \cite{song2013comparable}.
In the next section, we study this gap of complexity between \csmc{EVP} and \csmc{EVPP} via different properties of predicates.

\section{Predicates properties in the \texorpdfstring{$g_3$-error}{g3-error}}
\label{sec:validation}

In this section, we study properties of binary predicates that are commonly used to replace equality.
We show how each of them affects the error validation problem. 

First, we define the properties of interest in this paper.
Let $(\R, \Phi)$ be a relation scheme with predicates.
Let $A \in \R$ and $\phi_A$ be the corresponding predicate.
We consider the following properties:
\begin{itemize}[leftmargin=3em]
    \item[(\ctt{ref})] $\phi_A(x, x) = \ctt{true}$ for all $x \in \dom(A)$ (reflexivity)
    \item[(\ctt{tra})] for all $x, y, z \in \dom(A)$, $\phi_A(x, y) = \phi_A(y, z) = 
    \ctt{true}$ implies $\phi_A(x, z) = \ctt{true}$ (transitivity)
    \item[(\ctt{sym})] for all $x, y \in \dom(A)$, $\phi_A(x, y) = \phi_A(y, x)$ (symmetry)
    \item[(\ctt{asym})] for all $x, y \in \dom(A)$, $\phi_A(x, y) = \phi_A(y, x) = 
    \ctt{true}$ implies $x = y$ (antisymmetry).
\end{itemize}
Note that symmetry and antisymmetry together imply transitivity as $\phi_A(x, y) = \ctt{true}$ entails $x = y$.

As a first step, we show that symmetry and transitivity are sufficient to make \csmc{EVPP} solvable in polynomial time.
In fact, we prove that the resulting conflict-graph is a co-graph, as with equality.


\begin{theorem} \label{thm:vpp-ref}
The problem \csmc{EVPP} can be solved in polynomial time if the predicates used on each attribute are transitive (\ctt{tra}) and symmetric (\ctt{sym}).
\end{theorem}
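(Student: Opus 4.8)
The plan is to show that, under symmetry and transitivity, the conflict-graph $\CG_{\Phi}(r, X \imp A)$ is a disjoint union of complete multipartite graphs, hence a co-graph, and then to invoke the polynomial-time solvability of \csmc{MIS} on co-graphs. Recall from the excerpt that computing $g_3^{\Phi}(r, X \imp A)$ reduces to finding a maximum independent set in the conflict-graph, and that symmetry of every predicate makes this graph undirected, so \csmc{MIS} is well-posed on it.

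The first step is to analyse the two relations on tuples induced by the FD. For tuples $t_1, t_2$ write $t_1 \sim_X t_2$ when $\bigwedge_{B \in X} \phi_B(t_1[B], t_2[B]) = \ctt{true}$, and $t_1 \sim_A t_2$ when $\phi_A(t_1[A], t_2[A]) = \ctt{true}$. Since a conjunction of symmetric transitive relations is again symmetric and transitive, both $\sim_X$ and $\sim_A$ inherit these two properties. The subtlety, and the step I expect to be the main obstacle, is that we are \emph{not} given reflexivity: a symmetric transitive relation is an equivalence relation only on the set of elements related to themselves, while every other element is related to nothing (if $t_1 \sim t_2$, then symmetry and transitivity force $t_1 \sim t_1$). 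I would use this fact repeatedly to treat such isolated tuples separately.

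Next I would establish the structure. An edge $\{t_1, t_2\}$ of the conflict-graph means $t_1 \sim_X t_2$ but not $t_1 \sim_A t_2$; hence edges occur only inside an equivalence class of $\sim_X$, so the graph decomposes as the disjoint union over these classes, with every $\sim_X$-isolated tuple an isolated vertex. Fix one such class $C$. Inside $C$ the non-edges are exactly the pairs with $t_1 \sim_A t_2$. I would then partition $C$ into the tuples $t$ with $t \sim_A t$, which split into the equivalence classes of $\sim_A$, and the tuples with $t \not\sim_A t$. A tuple of the latter kind is $\sim_A$-related to nothing, hence adjacent to every other tuple of $C$; regarding each such tuple as a singleton part, $C$ becomes a complete multipartite graph whose parts are the $\sim_A$-classes together with these singletons. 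Consequently the whole conflict-graph is a disjoint union of complete multipartite graphs, exactly as in the equality case.

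Finally I would conclude. Complete multipartite graphs are co-graphs and co-graphs are closed under disjoint union, so $\CG_{\Phi}(r, X \imp A)$ is a co-graph, on which \csmc{MIS} is solvable in polynomial time; this yields a polynomial algorithm for \csmc{EVPP}. In fact the argument is constructive: a maximum independent set of each component is its largest part, so one may compute $g_3^{\Phi}(r, X \imp A)$ directly by grouping the tuples according to $\sim_X$ and then $\sim_A$ and summing the largest parts, all in polynomial time since each predicate is a polynomial-time oracle.
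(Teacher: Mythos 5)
Your structural analysis is sound and in fact sharper than the paper's own argument: the paper only proves (after a preprocessing step) that the conflict-graph contains no induced $P_4$, by contradiction on a path $t_1, t_2, t_3, t_4$, and then cites generic co-graph \csmc{MIS}; you derive the full complete-multipartite decomposition, mirroring the equality case, together with a direct counting algorithm. However, there is a genuine error at exactly the point you flagged as the main obstacle. If $t \sim_X t$ but $\phi_A(t[A], t[A]) = \ctt{false}$, then $(t, t)$ is itself a counterexample, i.e.\ $\{t\} \not\simmodels X \imp A$, so $t$ can belong to \emph{no} satisfying subrelation whatsoever. Your decomposition regards such a tuple as a singleton part of the complete multipartite component, and in a loopless complete multipartite graph a singleton part is a legitimate independent set of size $1$; consequently ``sum the largest parts'' overcounts whenever some $\sim_X$-class consists entirely of such tuples. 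Concretely, take $r = \{t\}$ with $X = \{B\}$, $\phi_B(t[B], t[B]) = \ctt{true}$ and $\phi_A(t[A], t[A]) = \ctt{false}$: the correct value is $g_3^{\Phi}(r, X \imp A) = 1$, but your method reports a largest part of size $1$, hence $g_3^{\Phi}(r, X \imp A) = 0$. The underlying point is that for non-reflexive $\phi_A$ the identification ``independent sets of $\CG_{\Phi}(r, X \imp A)$ $=$ satisfying subrelations'' fails: these self-conflicting tuples carry loops, and modelling each component as a \emph{loopless} complete multipartite graph silently erases them. (Tuples with $t \not\sim_X t$, by contrast, you handle correctly: they sit in no counterexample and are genuinely isolated vertices.)

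The repair is precisely the paper's first step: replace $r$ by $r' = r \setminus \{t \in r \mid \{t\} \not\simmodels X \imp A\}$, observing that such tuples are excluded from every satisfying subrelation while all other independent sets are unchanged, and only then run your decomposition. On $r'$, every tuple lying in a $\sim_X$-class satisfies $t \sim_A t$ (it has $t \sim_X t$, so survival of the deletion forces $\phi_A(t[A], t[A]) = \ctt{true}$), hence the parts are exactly the $\sim_A$-classes and your partial-equivalence observation carries the rest. With that single fix your proof is complete and even yields more than the paper's: an explicit grouping algorithm (group by $\sim_X$, then by $\sim_A$, sum the largest classes) and the complete-multipartite structure itself, rather than mere $P_4$-freeness obtained by contradiction.
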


\begin{proof}
Let $(R, \Phi)$ be a relation scheme with predicates.
Let $r$ be relation over $(\R, \Phi)$ and $X \imp A$ be a functional dependency, also over $(\R, \Phi)$.
We assume that each predicate in $\Phi$ is transitive and symmetric.
We show how to compute the size of a maximum independent set of $\CG_{\Phi}(r, X \imp A)$ 
in polynomial time.

As $\phi_A$ is not necessarily reflexive, a tuple $t$ in $r$ 
can produce a counter-example $(t, t)$ to $X \imp A$.
Indeed, it may happen that $\phi_B(t[B], t[B]) = \texttt{true}$ for each $B \in X$, but 
$\phi_A(t[A], t[A]) = \texttt{false}$.
However, it follows that $t$ never belongs to a subrelation $s$ of $r$ satisfying 
$s \simmodels X \imp A$.
Thus, let $r' = r \setminus \{t \in r \mid \{t\} \not\simmodels X \imp A\}$.
Then, a subrelation of $r$ satisfies $X \imp A$ if and only if it is an independent set 
of $\CG_{\Phi}(r, X \imp A)$ if and only if it is an independent set of $\CG_{\Phi}(r', X 
\imp A)$.
Consequently, computing $g_3^{\Phi}(r, X \imp A)$ is solving \csmc{MIS} in 
$\CG_{\Phi}(r', X \imp A)$.

We prove now that $\CG_{\Phi}(r', X \imp A)$ is a co-graph.
Assume for contradiction that $\CG_{\Phi}(r', X \imp A)$ has an induced path $P$ with $4$ 
elements, say $t_1, t_2, t_3, t_4$ with edges $(t_1, t_2)$, $(t_2, t_3)$ and $(t_3, 
t_4)$.
Remind that edges of $\CG_{\Phi}(r', X \imp A)$ are counterexamples to $X \imp A$ 
in $r'$. 
Hence, by symmetry and transitivity of the predicates of $\Phi$, we deduce that for each 
pair $(i, j)$ in $\{1, 2, 3, 4\}$, $\Land_{B \in X} \phi_B(t_i[B], t_j[B]) = \ctt{true}$.
Thus, we have $\Land_{B \in X} \phi_B(t_3[B], t_1[B]) = \Land_{B \in 
X} \allowbreak \phi_B(t_1[B], \allowbreak t_4[B]) = \ctt{true}$.
However, neither $(t_1, t_3)$ nor $(t_1, t_4)$ belong to $\CG_{\Phi}(r', \allowbreak X \imp A)$ since 
$P$ is an induced path by assumption.
Thus, $\phi_A(t_3[A], t_1[A]) = \phi_A(t_1[A], t_4[A]) = \ctt{true}$ must hold.
Nonetheless, the transitivity of $\phi_A$ implies $\phi_A(t_3[A], t_4[A]) = 
\ctt{true}$, a contradiction with $(t_3, t_4)$ being an edge of $\CG_{\Phi}(r', 
\allowbreak X \imp A)$.
We deduce that $\CG_{\Phi}(r', X \imp A)$ cannot contain an induced $P_4$, 
and that it is indeed a co-graph.
As \csmc{MIS} can be solved in polynomial time for co-graphs \cite{giakoumakis1997p_4}, 
the theorem follows.
\end{proof}

One may encounter non-reflexive predicates when dealing with strict 
orders or with binary predicates derived from \ctt{SQL} equality. 
In the $3$-valued logic of \ctt{SQL}, comparing the \ctt{null} value with itself 
evaluates to \ctt{false} rather than \ctt{true}.
With this regard, it could be natural for domain experts to use a predicate which is 
transitive, symmetric and reflexive almost everywhere but on the \ctt{null} value.
This would allow to deal with missing information without altering the data.

The previous proof heavily makes use of transitivity, which has a strong impact on the 
edges belonging to the conflict-graph.
Intuitively, conflict-graphs can become much more complex when transitivity is dropped.
Indeed, we prove an intuitive case: when predicates are not required to be 
transitive, \csmc{EVPP} becomes intractable.

\begin{theorem} \label{thm:vpp-tra}
The problem \csmc{EVPP} is \NP-complete even when the predicates used on each 
attribute are symmetric (\ctt{sym}) and reflexive (\ctt{ref}).
\end{theorem}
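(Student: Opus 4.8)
The plan is to show \csmc{EVPP} $\in \NP$ and then establish \NP-hardness by a reduction from \csmc{MIS}, which is \NP-complete; this leverages the correspondence between \csmc{EVPP} and \csmc{MIS} already set up in the previous sections. Membership in \NP is immediate: to certify $g_3^{\Phi}(r, X \imp A) \leq k$ I would guess a subrelation $s \subseteq r$ and verify both $s \simmodels X \imp A$ and $\card{s} \geq (1-k)\card{r}$ in polynomial time, which is possible because each predicate is a polynomial-time oracle and there are at most $\card{r}^2$ pairs of tuples to inspect.

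For hardness, the idea is to realize an arbitrary loopless undirected graph $\G = (\V, \E)$ \emph{directly} as a conflict-graph, using only symmetric and reflexive predicates. Given an instance $(\G, k)$ of \csmc{MIS}, I would build a relation scheme with predicates on two attributes $\R = \{A, B\}$, the FD $B \imp A$, and the relation $r = \{t_u \mid u \in \V\}$, where $t_u[A] = u$ and $t_u[B] = c$ for a single fixed value $c$. On $B$ I take $\phi_B$ to be equality; being reflexive, it forces $\phi_B(t_u[B], t_v[B]) = \phi_B(c, c) = \ctt{true}$ for every pair, so the antecedent of the FD is satisfied on all ordered pairs of tuples. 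On $A$ I define $\phi_A(x, y) = \ctt{false}$ exactly when $\{x, y\} \in \E$, and $\ctt{true}$ otherwise. Because $\G$ is undirected, $\phi_A$ is symmetric; because $\G$ is loopless, $\{x, x\} \notin \E$, so $\phi_A(x, x) = \ctt{true}$ and $\phi_A$ is reflexive. Both predicates are computable in polynomial time.

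With this construction, an ordered pair $(t_u, t_v)$ is a counterexample to $B \imp A$ precisely when $\phi_A(u, v) = \ctt{false}$, i.e. when $\{u, v\} \in \E$. Hence $\CG_{\Phi}(r, B \imp A)$ is exactly $\G$, and the subrelations of $r$ satisfying $B \imp A$ are exactly the independent sets of $\G$. It follows that $g_3^{\Phi}(r, B \imp A) = 1 - \frac{\alpha(\G)}{\card{\V}}$, where $\alpha(\G)$ denotes the size of a maximum independent set. Setting the threshold $k' = 1 - \frac{k}{\card{\V}}$ gives $g_3^{\Phi}(r, B \imp A) \leq k'$ if and only if $\alpha(\G) \geq k$, so $(\G, k)$ is a \ctt{yes}-instance of \csmc{MIS} exactly when the constructed instance is a \ctt{yes}-instance of \csmc{EVPP}. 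The reduction is clearly polynomial, which yields \NP-hardness and therefore \NP-completeness.

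I expect the delicate points to be cosmetic rather than structural: I must keep the tuples pairwise distinct on $A$ so that edges and non-edges of $\G$ are faithfully encoded by $\phi_A$, and I must keep the $X$-side column constant so that the antecedent is satisfied on every pair and the conflict-graph is governed solely by $\phi_A$. The key observation is that the loopless, undirected nature of an \emph{arbitrary} graph matches exactly the reflexivity and symmetry demanded of $\phi_A$; this is what lets the reduction go through without transitivity anywhere, which is precisely the regime the theorem isolates in contrast with \autoref{thm:vpp-ref}.
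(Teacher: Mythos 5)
Your proof is correct, but it takes a genuinely different route from the paper's. You reduce directly from \csmc{MIS} by hard-wiring the input graph into the predicate of the consequent attribute: with a constant column on $B$, every ordered pair of tuples triggers the antecedent, and $\phi_A$ is the complement of the adjacency relation, so $\CG_{\Phi}(r, B \imp A)$ \emph{is} $\G$, and reflexivity and symmetry of $\phi_A$ come precisely from $\G$ being loopless and undirected. The paper instead reduces from Maximum Clique and encodes the graph in the \emph{data}: one attribute $B_{\ell}$ per edge, each with the same fixed constant-domain predicate ($\dom(B_{\ell}) = \{0,1,2\}$ and $\phi_{\ell}(x,y) = \ctt{true}$ iff $x = y$ or $x + y < 3$), plus plain equality on $A$; two tuples then agree on all of $X$ iff the corresponding vertices are non-adjacent, so the conflict-graph is the complement of $\G$ and cliques of $\G$ correspond to satisfying subrelations. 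What each approach buys: yours is shorter and needs only a two-attribute scheme, but it makes the predicate an instance-specific object — $\phi_A$ carries a description of size $\Theta(\card{\E})$ (say, a lookup table; evaluation is still polynomial in the size of its arguments, so it does fit the paper's black-box model, though only under that reading of the model). The paper's construction establishes something sharper in spirit: hardness persists even when every predicate is a fixed, uniform, constant-size rule independent of the instance, so the intractability provably resides in the relational data rather than in the expressive power of the predicates supplied with the input; the price is a schema that grows with $\card{\E}$. Both reductions stay within (\ctt{sym}) and (\ctt{ref}) and necessarily produce non-transitive predicates, as they must given Theorem \ref{thm:vpp-ref}, and your \NP-membership argument coincides with the paper's.
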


The proof is given in Appendix \ref{app:vpp-tra}.
It is a reduction from the problem (dual to MIS) of finding
the size of a maximum clique in general graphs. 
It uses arguments similar to the proof of Song et al. \cite{song2013comparable} showing 
the \NP-completeness of \csmc{EVPP} for comparable dependencies.

We turn our attention to the case where symmetry is dropped from the predicates.
In this context, conflict-graphs are directed.
Indeed, an ordered pair of tuples $(t_1, t_2)$ may be a counterexample to a functional 
dependency, but not $(t_2, t_1)$.
Yet, transitivity still contributes to constraining the structure of 
conflict-graphs, as suggested by the following example.
\begin{example} \label{ex:ex-sym}
    We consider the relation of Example \ref{ex:defs}. 
    We equip $A, B, C, D$ with the following predicates:
    \begin{itemize}
        \item $\phi_C(x, y) = \ctt{true}$ if and only if $x \leq y$
        \item $\phi_A(x, y)$ is defined by
        \[ 
        \phi_A(x, y) =
        \begin{cases}
            \ctt{true} & \text{if } x = y \\
            \ctt{true} & \text{if } x = 1 \text{ and } y \in \{2, 4\} \\
            \ctt{true} & \text{if } x = 3 \text{ and } y = 4 \\
            \ctt{false} & \text{otherwise.}
        \end{cases}
        \]
        \item $\phi_B$ and $\phi_D$ are the equality.
    \end{itemize}
    Let $\Phi = \{\phi_A, \phi_B, \phi_C, \phi_D\}$.
    The conflict-graph $\CG_{\Phi}(C \imp A)$ is represented in Figure \ref{fig:ex-sym}.
    Since $\phi_C$ is transitive, we have $\phi_C(t_3[C], t_j[C]) = \ctt{true}$ for each tuple $t_j$ of $r$.
    Moreover, $\phi_A(t_3[A], t_6[A]) = \ctt{false}$ since $(t_3, t_6)$ is a counterexample to $C \imp A$.
    Therefore, the transitivity of $\phi_A$ implies either $\phi_A(t_3[A], t_4[A]) = \ctt{false}$ or $\phi_A(t_4[A], t_6[A]) = \ctt{false}$.
    Hence, at least one of $(t_3, t_4)$ and $(t_4, t_6)$ must be a counterexample to $C \imp A$ too.
    In the example, this is $(t_3, t_4)$.
    \begin{figure}[ht!]
        \centering 
        \includegraphics[scale=1.0, page=3]{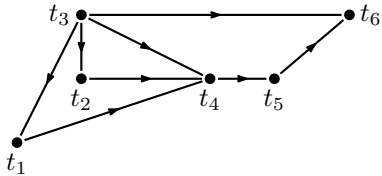}
        \caption{The conflict-graph $\CG_{\Phi}(r, C \imp A)$ of Example \ref{ex:ex-sym}.}
        \label{fig:ex-sym}
    \end{figure}
\end{example}
Nevertheless, if transitivity constrains the complexity of the graph,
dropping symmetry still allows new kinds of graph structures.
Indeed, in the presence of symmetry, a conflict-graph cannot contain induced 
paths with more than $3$ elements because of transitivity.
However, such paths may exist when symmetry is removed.
\begin{example}
In the previous example, the tuples $t_2, t_4, t_5, t_6$ form an induced $P_4$ of the 
underlying undirected graph of $\CG_{\Phi}(r, C \imp A)$, even though $\phi_A$ and 
$\phi_C$ enjoy transitivity.
\end{example}
Therefore, we are left with the following intriguing question: can the loss of symmetry   
be used to break transitivity, and offer conflict-graphs a structure sufficiently complex 
to make \csmc{EVPP} intractable?
The next theorem answers this question affirmatively.

\begin{theorem} \label{thm:vpp-sym}
The problem \csmc{EVPP} is \NP-complete even when the predicates used on each 
attribute are transitive (\ctt{tra}), reflexive (\ctt{ref}), and antisymmetric (\ctt{asym}).
\end{theorem}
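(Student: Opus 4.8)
The plan is to prove both membership in \NP{} and \NP-hardness, since the statement asserts \NP-completeness. Membership is the easy direction: given an instance $(r, X \imp A, k)$ whose predicates are partial orders, a subrelation $s \subseteq r$ with $\card{s} \geq (1-k)\card{r}$ and $s \simmodels X \imp A$ is a polynomial-size certificate, and checking $s \simmodels X \imp A$ only requires evaluating the (polynomial-time) predicate oracles on all ordered pairs of $s$. So I would spend all the effort on hardness.

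Before reducing, I would recast the problem order-theoretically. Since a single attribute may already carry an arbitrary partial order, I would use a two-attribute scheme $\R = \{C, A\}$ with the FD $C \imp A$, letting $\phi_C$ realize a poset $\sqsubseteq$ on $\dom(C)$ and $\phi_A$ a poset $\preceq$ on $\dom(A)$ (both reflexive, transitive and antisymmetric, as required). Assigning pairwise distinct $C$- and $A$-values to the tuples, an ordered pair $(t_1, t_2)$ is a counterexample to $C \imp A$ exactly when $t_1[C] \sqsubseteq t_2[C]$ but $t_1[A] \not\preceq t_2[A]$; equivalently, a subrelation satisfies the FD iff the identity map is order-preserving from $(\sqsubseteq)$ to $(\preceq)$ on it. Thus $\CG_{\Phi}(r, C \imp A)$ is the \emph{monotonicity-violation graph} of the pair of posets, and \csmc{EVPP} asks for a maximum subset on which $\sqsubseteq$ refines to $\preceq$.

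With this view, the reduction goal becomes: given a graph $\G$, build two posets whose violation graph has a controlled maximum independent set, so that a polynomial \csmc{MIS} algorithm would solve \csmc{MIS} on $\G$. I would reduce from \csmc{MIS} on general graphs, but the reduction cannot be naive. An edge of the violation graph forces its endpoints to be $\sqsubseteq$-comparable, so the violation graph is always a spanning subgraph of a comparability graph, and several superficially promising encodings collapse to tractable classes: if $\phi_C$ and $\phi_A$ are both total orders one obtains only permutation graphs; if $\preceq$ is discrete only comparability graphs; if $\sqsubseteq$ is total only co-comparability graphs, all of which admit polynomial \csmc{MIS}. Hardness must therefore exploit orders that are genuinely partial on both attributes.

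Concretely, I would orient the edges of $\G$ into a layered acyclic structure, create a tuple $t_v$ for each vertex $v$, and add auxiliary tuples separating the layers. The poset $\sqsubseteq$ would make $t_u, t_v$ comparable precisely for adjacent $u, v$; but transitivity of $\sqsubseteq$ unavoidably forces further comparabilities when edges chain together, exactly the forcing phenomenon of Example \ref{ex:ex-sym}. The role of $\preceq$ is to \emph{neutralize} these spurious comparabilities: by aligning the $A$-values along the layers I would ensure that every transitively forced, non-adjacent pair $(t_i, t_j)$ also satisfies $t_i[A] \preceq t_j[A]$ (so it is not a counterexample), while each genuine edge remains a counterexample; the auxiliary tuples would be placed so as to be pairwise non-conflicting and non-conflicting with every $t_v$, hence freely includable, giving $\alpha(\CG_{\Phi}(r, C \imp A)) = \alpha(\G) + c$ for a fixed constant $c$. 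The main obstacle is precisely this interaction of the two transitivity requirements: I must simultaneously guarantee that (i) the transitive closure of $\sqsubseteq$ never leaves two non-adjacent vertices conflicting after neutralization, (ii) the transitive closure of $\preceq$ never accidentally repairs a genuine edge into a non-counterexample, and (iii) antisymmetry is preserved on both attributes. Verifying that the layered gadget meets (i)--(iii) and that the independent-set correspondence holds in both directions is the technical heart of the argument; the remaining points (polynomial size of $r$ and polynomial-time evaluability of $\phi_C, \phi_A$) are routine.
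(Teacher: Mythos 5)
Your high-level strategy is the right one, and it matches the paper's in spirit: membership in \NP{} is routine, and hardness is obtained by encoding a graph as the violation structure of two partial orders, using the consequent attribute to \emph{neutralize} the comparabilities that transitivity forces on the antecedent attribute. But there is a genuine gap: the reduction is never actually constructed, and the part you defer --- verifying your conditions (i)--(iii) for a ``layered'' orientation of a \emph{general} graph --- is precisely where a direct reduction from \csmc{MIS} can fail. Concretely, if you orient $\G$ acyclically and take $\sqsubseteq$ to be the transitive closure of the orientation, then every reachable non-adjacent pair $(u,w)$ must be neutralized by $a_u \preceq a_w$, and these neutralizations themselves chain: a directed path $u \to v \to w \to y \to t$ with non-edges $uw$ and $wt$ but with $ut \in \E$ (an orientation of $C_5$, for instance) forces $a_u \preceq a_w$ and $a_w \preceq a_t$, hence $a_u \preceq a_t$ by transitivity of $\preceq$, which ``repairs'' the genuine edge $ut$ --- your condition (ii) is violated. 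Some orientations of $C_5$ avoid this, but you give no argument that a suitable orientation (or layered gadget) exists for every graph, and nothing in the proposal reduces this to a known fact. Your claim that the auxiliary tuples are pairwise non-conflicting and freely includable also sits uneasily with their stated role of separating layers: if they lie inside $\sqsubseteq$-chains, their comparabilities generate further neutralization constraints that feed the same chaining problem.

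The paper closes exactly this gap by changing the source problem rather than the gadget: it reduces from \csmc{MIS} on $2$-subdivision graphs, where the problem remains \NP-complete by Poljak's result, and where the forcing is tame by design. On $\R = \{B, A\}$ with $B \imp A$, it sets $t_i[B] = a_i a_i$ for an original vertex $u_i$ and gives \emph{both} subdivision tuples of an edge $e_k = (u_i, u_j)$ the \emph{same} $B$-value $a_i a_j$; the predicate $\phi_B$ is then a partial order of height one ($a_ia_i$ below the mixed pairs containing $a_i$), so its transitive closure adds only one kind of spurious comparability, namely $t_i$ versus the far subdivision tuple $t_k^j$. Taking $\phi_A$ to be plain equality and setting $t_k^j[A] = i = t_i[A]$ neutralizes exactly these pairs, and because equality's transitivity cannot create new $\preceq$-relations among the pairwise-distinct remaining $A$-values, your conditions (i)--(iii) become immediate to check. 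Note also that the subdivision tuples are \emph{not} freely includable there --- they conflict among themselves and with their endpoints, and one uses $\alpha(\G_2) = \alpha(\G) + \card{\E}$ at the level of the source problem --- so the accounting differs from your $\alpha(\CG_{\Phi}) = \alpha(\G) + c$ picture. To salvage your proposal you would either have to prove the missing orientation lemma for general graphs, or do what the paper does and restrict to a graph class (such as $2$-subdivisions) where \csmc{MIS} stays hard but the poset you must build has bounded height.
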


The proof is given in Appendix \ref{app:vpp-sym}.
It is a reduction from MIS in $2$-subdivision graphs \cite{poljak1974note}.

Theorem \ref{thm:vpp-ref}, Theorem \ref{thm:vpp-tra} and Theorem \ref{thm:vpp-sym} 
characterize the complexity of \csmc{EVPP} for each combination of predicates properties.
In the next section, we discuss the granularity of these, and we use them as a framework 
to compare the complexity of \csmc{EVPP} for some known extensions of functional 
dependencies.

\section{Discussions}
\label{sec:discussions}

Replacing equality with various predicates to extend the semantics of classical functional 
dependencies is frequent \cite{caruccio2015relaxed,song10tree}.
Our approach offers to compare these extensions on \csmc{EVPP} within a unifying 
framework based on the properties of the predicates they use.
We can summarize our results with the hierarchy of classes of predicates given 
in Figure \ref{fig:summary}.
\begin{figure}[ht!]
\centering 
\includegraphics[scale=1.0, page=1]{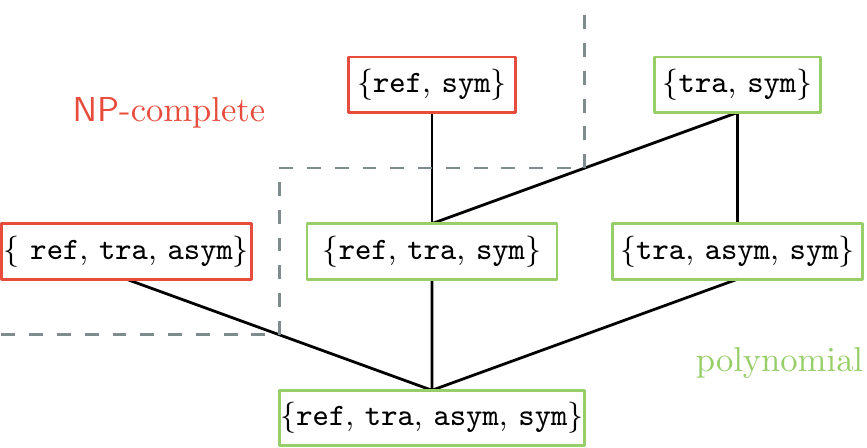}%
\caption{Complexity of \csmc{EVPP} with respect to the properties of predicates.}
\label{fig:summary}
\end{figure}

Regarding the computation of the $g_3$-error, most existing works have focused on
similarity/distance predicates.
First, the $g_3$-error can be computed in polynomial time for classical functional 
dependencies \cite{huhtala1998efficient}.
Then, Song et al. \cite{song2013comparable} show that \csmc{EVPP} is \NP-complete 
for a broad range of extensions of FDs which happen to be reflexive (\ctt{ref}) 
and symmetric (\ctt{sym}) predicates, which coincides with Theorem \ref{thm:vpp-tra}.
However, they do not study predicate properties as we do in this paper.
More precisely, they identify the hardness of \csmc{EVPP} for differential 
\cite{song2010data}, matching \cite{fan2008dependencies}, metric \cite{koudas2009metric}, 
neighborhood \cite{bassee2001neighborhood}, and comparable dependencies 
\cite{song2013comparable}.
For some of these dependencies, predicates may be defined over sets of 
attributes.
Using one predicate per attribute and taking their conjunction is a particular case of 
predicate on attribute sets. 

Some extensions of FDs use partial orders as predicates.
This is the case of ordered dependencies \cite{dong1982applying,ginsburg1983order}, 
 ordered FDs \cite{ng2001extension}, and 
also of some sequential dependencies \cite{golab2009sequential} and denial constraints 
\cite{bertossi2005complexity} for instance.
To our knowledge, the role of symmetry in \csmc{EVPP} has received little attention.
For sequential dependencies \cite{golab2009sequential}, a measure different than the 
$g_3$-error have been used.
The predicates of Theorem \ref{thm:vpp-sym} are reflexive, transitive and antisymmetric.
Hence they are partial orders.
Consequently, the FDs in this context are \emph{ordered 
functional dependencies} as defined by Ng \cite{ng2001extension}.
We obtain the following corollary:
\begin{corollary} \label{cor:ods}
\csmc{EVPP} is \NP-complete for ordered functional dependencies.
\end{corollary}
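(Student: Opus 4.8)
The plan is to observe that Corollary \ref{cor:ods} is not a fresh reduction but a reinterpretation of Theorem \ref{thm:vpp-sym} through Ng's definition of ordered functional dependencies, so the work reduces to matching definitions. First I would recall from the Preliminaries that a partial order is exactly a binary relation that is reflexive, transitive, and antisymmetric. Theorem \ref{thm:vpp-sym} proves the \NP-completeness of \csmc{EVPP} under the hypothesis that every predicate $\phi_A$ in $\Phi$ satisfies precisely these three properties (\ctt{ref}, \ctt{tra}, \ctt{asym}). Hence each such $\phi_A$ is a partial order on $\dom(A)$, and the relation schemes with predicates arising in the reduction of Theorem \ref{thm:vpp-sym} are exactly those in which each attribute is equipped with a partial order.

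Next I would invoke the definition of ordered functional dependencies of Ng \cite{ng2001extension}: these are FDs whose (relaxed) satisfaction is tested using a partial order on each attribute, with the conjunction taken over the antecedent. This is precisely the semantics $\simmodels$ of Definition \ref{def:sat-with-predicates} specialised to partial-order predicates. Consequently, every instance produced by the reduction of Theorem \ref{thm:vpp-sym} is a legitimate instance of \csmc{EVPP} restricted to ordered functional dependencies, and the reduction transfers verbatim. This already yields \NP-hardness for ordered functional dependencies.

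Finally, for \NP-completeness I would note that membership in \NP{} is inherited for free: \csmc{EVPP} lies in \NP{} in general (a maximum subrelation $s \subseteq r$ with $s \simmodels X \imp A$ is a polynomial-size certificate, verifiable in polynomial time because the predicates are polynomial-time oracles), and the ordered-FD case is merely a syntactic restriction of \csmc{EVPP}. Combining hardness with membership gives the corollary. The only step requiring genuine care — and thus the "main obstacle," modest as it is — is to confirm that Ng's semantics for a set-valued antecedent coincides with the per-attribute conjunction built into $\simmodels$; once that alignment is verified, the corollary is immediate.
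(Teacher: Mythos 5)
Your proposal is correct and follows essentially the same route as the paper: the paper likewise derives the corollary by observing that the predicates of Theorem \ref{thm:vpp-sym} are reflexive, transitive, and antisymmetric, hence partial orders, so the instances produced there are instances for ordered functional dependencies in the sense of Ng, with \NP-membership inherited from the proof of Theorem \ref{thm:vpp-tra}. Your closing caveat about aligning Ng's set-valued antecedent semantics with the per-attribute conjunction of Definition \ref{def:sat-with-predicates} is a detail the paper asserts without comment, but it does not change the argument.
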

Ordered functional dependencies are a restricted case of ordered dependencies 
\cite{ginsburg1983order}, 
sequential dependencies \cite{golab2009sequential}, and denial constraints 
\cite{bertossi2005complexity} (see \cite{song10tree}).
The hardness of computing the $g_3$-error for these dependencies follows from Corollary 
\ref{cor:ods}.

The hierarchy depicts quite accurately the current knowledge about 
\csmc{EVPP} and the delimitation between tractable and intractable cases.
However, this analysis may require further refinements.
Indeed, there may be particular types of FDs with predicates where 
\csmc{EVPP} is tractable in polynomial time, even though their predicates belong to a 
class for which the problem is \NP-complete.
For instance, assume that each attribute $A$ in $\R$ is equipped with a \emph{total} 
order $\phi_A$.
We show in Proposition \ref{prop:total-order} and Corollary \ref{cor:total-order} that in 
this case, \csmc{EVPP} can be solved in polynomial time, even though the predicates are 
reflexive, transitive and antisymmetric.

\begin{proposition} \label{prop:total-order}
Let $(\R, \Phi)$ be a relation scheme with predicates.
Then, \csmc{EVPP} can be solved in polynomial time for a given FD $X \imp A$ if $\phi_B$ 
is transitive for each $B \in X$ and $\phi_A$ is a total order.
\end{proposition}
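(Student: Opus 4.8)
The plan is to reduce, as in the previous results, to computing a maximum independent set in the conflict-graph $\CG_{\Phi}(r, X \imp A)$, and then to show that under the stated hypotheses this graph is a comparability graph, a class for which the problem is tractable. First I would record the shape of counterexamples. Writing $\leq$ for the total order $\phi_A$ and abbreviating $t_1 \preceq_X t_2$ for $\bigwedge_{B \in X} \phi_B(t_1[B], t_2[B]) = \ctt{true}$, an ordered pair $(t_1, t_2)$ is a counterexample exactly when $t_1 \preceq_X t_2$ and $\phi_A(t_1[A], t_2[A]) = \ctt{false}$. Because $\phi_A$ is a \emph{total} order, $\phi_A(t_1[A], t_2[A]) = \ctt{false}$ is equivalent to $t_2[A] < t_1[A]$; in particular no counterexample occurs between tuples of equal $A$-value, and, $\phi_A$ being reflexive, no tuple conflicts with itself, so the cleanup step of Theorem \ref{thm:vpp-ref} is unnecessary here. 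Hence every arc of $\CG_{\Phi}(r, X \imp A)$ runs from a tuple of strictly larger $A$-value to one of strictly smaller $A$-value, and between any two tuples at most one of the two possible arcs is present.

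Next I would note that $\preceq_X$ is transitive: this is immediate from the transitivity of each $\phi_B$, since $t_1 \preceq_X t_2 \preceq_X t_3$ yields $\phi_B(t_1[B], t_3[B]) = \ctt{true}$ for every $B \in X$. The core step is then to show that $\CG_{\Phi}(r, X \imp A)$ is already a transitive orientation of its underlying undirected graph. Suppose $(t_1, t_2)$ and $(t_2, t_3)$ are arcs. Then $t_1 \preceq_X t_2 \preceq_X t_3$ gives $t_1 \preceq_X t_3$, and $t_3[A] < t_2[A] < t_1[A]$ gives $t_3[A] < t_1[A]$; hence $(t_1, t_3)$ is again a counterexample, i.e. an arc. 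Since this orientation is also irreflexive and antisymmetric (both forced by the strict inequality on $A$-values), it is a strict partial order, so the underlying undirected graph of $\CG_{\Phi}(r, X \imp A)$ is a comparability graph.

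Finally I would invoke the tractability of maximum independent set on comparability graphs. A subrelation satisfies $X \imp A$ iff it is an independent set of $\CG_{\Phi}(r, X \imp A)$, and an independent set of a comparability graph is precisely an antichain of the associated poset. By Dilworth's theorem the size of a maximum antichain equals the minimum number of chains covering the poset, which is computable in polynomial time via maximum matching in an auxiliary bipartite graph. This quantity is exactly $\max(\{ \card{s} \mid s \subseteq r, s \simmodels X \imp A \})$, whence $g_3^{\Phi}(r, X \imp A)$, and thus \csmc{EVPP}, is decidable in polynomial time.

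I expect the main obstacle to be the transitivity argument for the orientation, and specifically making explicit that totality of $\phi_A$ is exactly what the proof needs: it is what turns ``$\phi_A(t_1[A], t_2[A]) = \ctt{false}$'' into the usable strict inequality $t_2[A] < t_1[A]$ and rules out incomparable $A$-values, so that all arcs are consistently directed by the $A$-order. Without totality (a mere partial order on $A$) this conversion fails and the comparability structure collapses, which is consistent with the intractability established in Theorem \ref{thm:vpp-sym}; everything else is routine once the comparability structure is in place.
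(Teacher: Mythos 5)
Your proof is correct and takes essentially the same approach as the paper's: you show the counterexample relation is a (strict) partial order---totality of $\phi_A$ yielding antisymmetry, and transitivity of the $\phi_B$'s and $\phi_A$ yielding transitivity of the orientation---so the underlying conflict-graph is a comparability graph, on which maximum independent set is polynomial. Your only departures are cosmetic: translating $\phi_A(x,y) = \ctt{false}$ into the strict inequality $y < x$ streamlines the paper's contradiction-based transitivity argument, and you justify the final tractability step explicitly via Dilworth's theorem and bipartite matching where the paper simply cites the known polynomial algorithm for comparability graphs.
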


\begin{proof}
Let $(\R, \Phi)$ be a relation scheme with predicates and $X \imp A$ a functional 
dependency.
Assume that $\phi_B$ is transitive for each $B \in X$ and that $\phi_A$ is a total order.
Let $r$ be a relation over $(\R, \Phi)$.
Let $\G = (r, \E)$ be the undirected graph underlying $\CG_{\Phi}(r, X \imp A)$, that is, 
$(t_i, t_j) \in \E$ if and only if $(t_i, t_j)$ or $(t_j, t_i)$ is an edge 
of $\CG_{\Phi}(r, X \imp A)$.

We show that $\G$ is a comparability graph.
To do so, we associate the following predicate $\leq$ to $\CG_{\Phi}(r, X \imp A)$: for 
each pair $t_i, t_j$ of tuples of $r$, $t_i \leq t_i$ and $t_i \leq t_j$ if $(t_i, t_j)$ 
is a counterexample to $X \imp A$.
We show that $\leq$ is a partial order:
\begin{itemize}
\item \emph{reflexivity}. It follows by definition.

\item \emph{antisymmetry}. We use contrapositive.
Let $t_i, t_j$ be two distinct tuples of $r$ and assume that $(t_i, t_j)$ belongs to
$\CG_{\Phi}(r, X \imp A)$.
We need to prove that $(t_j, t_i)$ does not belong to $\CG_{\Phi}(r, X \imp A)$, \ie it 
is not a counterexample to $X \imp A$.
First, $(t_i, t_j) \in \CG_{\Phi}(r, X \imp A)$ implies that $\phi_A(t_i[A], t_j[A]) = 
\ctt{false}$.
Then, since $\phi_A$ is a total order, $\phi_A(t_j[A], t_i[A]) = \ctt{true}$.
Consequently, $(t_j, t_i)$ cannot belong to $\CG_{\Phi}(r, X \imp A)$ and $\leq$ is 
antisymmetric.

\item \emph{transitivity}. Let $t_i, t_j, t_k$ be tuples of $r$ such that $(t_i, t_j)$ 
and $(t_j, t_k)$ are in $\CG_{\Phi}(r, X \imp A)$.
Applying transitivity, we have that $\Land_{B \in X} 
\phi_B(t_i[B], t_k[B]) = \ctt{true}$.
We show that $\phi_A(t_i[A], \allowbreak t_k[A]) = \ctt{false}$.
Since $(t_i, t_j)$ is a counterexample to $X \imp A$, we have $\phi_A(t_i[A], t_j[A]) = 
\ctt{false}$.
As $\phi_A$ is a total order, we deduce that $\phi_A(t_j[A], t_i[A]) = \ctt{true}$.
Similarly, we obtain $\phi_A(t_k[A], \allowbreak t_j[A]) = \ctt{true}$.
As $\phi_A$ is transitive, we derive $\phi_A(t_k[A], \allowbreak t_i[A]) = \ctt{true}$.
Now assume for contradiction that $\phi_A(t_i[A], t_k[A]) = \ctt{true}$.
Since, $\phi_A(t_k[A], t_j[A]) = \ctt{true}$, we derive $\phi_A(t_i[A], t_j[A]) = 
\ctt{true}$ by transitivity of $\phi_A$, a contradiction.
Therefore, $\phi_A(t_i[A], \allowbreak t_k[A]) = \ctt{false}$.
Using the fact that $\Land_{B \in X} \allowbreak \phi_B(t_i[B], t_k[B]) = \ctt{true}$, we 
conclude that $(t_i, t_k)$ is also a counterexample to $X \imp A$.
The transitivity of $\leq$ follows.
\end{itemize}
Consequently, $\leq$ is a partial order and $\G$ is indeed a comparability graph.
Since \csmc{MIS} can be solved in polynomial time for comparability graphs 
\cite{golumbic2004algorithmic}, the result follows.
\end{proof}

We can deduce the following corollary on total orders, that can be used for ordered 
dependencies.

\begin{corollary} \label{cor:total-order}
Let $(\R, \Phi)$ be a relation scheme with predicates.
Then, \csmc{EVPP} can be solved in polymomial time if each predicate in $\Phi$ is a total order.
\end{corollary}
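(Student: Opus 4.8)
The plan is to derive Corollary \ref{cor:total-order} as an immediate specialization of Proposition \ref{prop:total-order}. The key observation is that a total order is, by definition, a partial order, and hence in particular reflexive, transitive, and antisymmetric; thus transitivity comes for free from the total-order assumption. This is precisely the bridge that lets the hypotheses of the proposition fire under the stronger, uniform assumption of the corollary.

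First, I would fix an arbitrary instance of \csmc{EVPP} in which every predicate of $\Phi$ is a total order; such an instance consists of a relation $r$, a FD $X \imp A$, and a threshold $k \in \cb{R}$. Since $\phi_A$ is a total order by hypothesis, and since each $\phi_B$ with $B \in X$ is a total order and therefore transitive, the two conditions required by Proposition \ref{prop:total-order}---namely that $\phi_B$ be transitive for every $B \in X$ and that $\phi_A$ be a total order---are both satisfied. I would then invoke Proposition \ref{prop:total-order} directly to conclude that $g_3^{\Phi}(r, X \imp A)$ can be computed, and hence \csmc{EVPP} decided, in polynomial time for this FD. Because the FD $X \imp A$ is an arbitrary part of the input, the argument applies uniformly to every \csmc{EVPP} instance over a relation scheme whose predicates are all total orders, which yields the corollary.

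I would expect no genuine obstacle here: essentially all of the difficulty was already absorbed into Proposition \ref{prop:total-order}, whose proof constructs a comparability graph on $r$ and appeals to the polynomial solvability of \csmc{MIS} on comparability graphs \cite{golumbic2004algorithmic}. The only point to verify---and it is immediate---is that the weaker requirement the proposition places on the left-hand predicates (transitivity alone) is subsumed by the stronger total-order assumption of the corollary, so that no attribute in $X$ need carry more structure than the proposition demands. Consequently the corollary is a one-line consequence of the proposition.
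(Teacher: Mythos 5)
Your proposal is correct and matches the paper's intended derivation exactly: the paper states Corollary \ref{cor:total-order} as an immediate consequence of Proposition \ref{prop:total-order}, relying on precisely the observation you make, namely that a total order is in particular transitive, so the proposition's hypotheses (transitivity of each $\phi_B$ for $B \in X$ and $\phi_A$ a total order) are satisfied for every FD over the scheme. Nothing further is needed.
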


In particular, Golab et al. \cite{golab2009sequential} proposed a polynomial-time 
algorithm for a variant of $g_3$ applied to a restricted type of sequential dependencies 
using total orders on each attribute.

\section{Conclusion and future work}
\label{sec:conclusion}

In this work, we have studied the complexity of computing the $g_3$-error when equality 
is replaced by more general predicates.
We studied four common properties of binary predicates: reflexivity, symmetry, 
transitivity, and antisymmetry.
We have shown that when symmetry and transitivity are taken together, the $g_3$-error can 
be computed in polynomial time.
Transitivity strongly impacts the structure of the conflict-graph of the 
counterexamples to a functional dependency in a relation.
Thus, it comes as no surprise that dropping transitivity makes the $g_3$-error hard to 
compute.
More surprisingly, removing symmetry instead of transitivity leads to the same conclusion.
This is because deleting symmetry makes the conflict-graph directed.
In this case, the orientation of the edges weakens the impact of transitivity, thus 
allowing the conflict-graph to be complex enough to make the $g_3$-error computation 
problem intractable. 

We believe our approach sheds new light on the problem of computing the $g_3$-error, and 
that it is suitable for estimating the complexity of this problem when defining new types 
of FDs, by looking at the properties of predicates used to compare values.

We highlight now some research directions for future works.
In a recent paper \cite{livshits2020computing}, Livshits et al. study the problem of 
computing optimal repairs in a relation with respect to a set of functional dependencies. 
A repair is a collection of tuples which does not violate a prescribed set of FDs.
It is optimal if it is of maximal size among all possible repairs.
Henceforth, there is a strong connection between the problem of computing repairs and 
computing the $g_3$-error with respect to a collection of FDs.
In their work, the authors give a dichotomy between tractable and intractable cases based 
on the structure of FDs.
In particular, they use previous results from Gribkoff et al. \cite{gribkoff2014most} to  
show that the problem is already $\NP$-complete for $2$ FDs in general.
In the case where computing an optimal repair can be done in polynomial time, it would be 
interesting to use our approach and relax equality with predicates in order to study the 
tractability of computing the $g_3$-error on a collection of FDs with relaxed equality.

From a practical point of view, the exact computation of the $g_3$-error is 
extremely expensive in large datasets.
Recent works \cite{caruccio2016discovery,faure2022assessing} have 
proposed to use approximation algorithms to compute the $g_3$-error both for equality 
and predicates. 
It could be of interest to identify properties or classes of predicates where more 
efficient algorithms can be adopted.
It is also possible to extend the existing algorithms calculating the  
classical $g_3$-error (see \eg \cite{huhtala1999tane}).
They use the projection to identify equivalence classes among values of $A$ and $X$.
However, when dropping transitivity (for instance in similarity predicates), separating 
the values of a relation into \textit{``similar classes''} requires to devise a new 
projection operation, a seemingly tough but fascinating problem to investigate.

\paragraph{Acknowledgment.} We thank the reviewers for their valuable feedback..
We also thank the Datavalor initiative of Insavalor (subsidiary of INSA Lyon) for funding part of this work.

\bibliographystyle{alpha}
\bibliography{biblio}
\newpage
\appendix

\section{Proof of Theorem \ref{thm:vpp-tra}}
\label{app:vpp-tra}

\begin{theorem*}[\ref{thm:vpp-tra}]
    The problem \csmc{EVPP} is \NP-complete even when the predicates used on each attributes are symmetric 
    (\ctt{sym}) and reflexive (\ctt{ref}).
\end{theorem*}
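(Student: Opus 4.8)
The plan is to show membership in \NP\ first, then prove hardness by a polynomial reduction from the maximum clique problem. For membership, observe that a certificate is simply a subrelation $s \subseteq r$ with $\card{s} \geq (1 - k)\card{r}$. Verifying $s \simmodels X \imp A$ amounts to testing the implication of Definition~\ref{def:sat-with-predicates} on each ordered pair of tuples of $s$, and since every predicate is a polynomial-time oracle this runs in polynomial time. Hence \csmc{EVPP} lies in \NP\ regardless of the properties assumed on the predicates.

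For hardness, I would reduce from the \NP-complete problem of deciding whether a graph $\G = (\V, \E)$ has a clique of size at least $c$. The guiding idea is that, once transitivity is no longer required, the predicate on the right-hand attribute can realize an arbitrary reflexive symmetric relation, so that the conflict-graph can be forced to coincide with the complement $\overline{\G}$. Concretely, I would take a relation scheme with two attributes $\R = \{A, B\}$ and the FD $B \imp A$, and create one tuple $t_v$ per vertex $v \in \V$. On the left-hand attribute I set $t_v[B]$ to a single common value for all $v$ and let $\phi_B$ be equality, which is reflexive and symmetric and forces $\phi_B(t_u[B], t_v[B]) = \ctt{true}$ for every pair. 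On the right-hand attribute I give each tuple a distinct value $t_v[A] = a_v$ and define
\[
\phi_A(a_u, a_v) = \ctt{true} \iff u = v \text{ or } \{u, v\} \in \E,
\]
and $\phi_A(x, y) = \ctt{true} \iff x = y$ for every other pair $x, y \in \dom(A)$. This predicate is reflexive (by the clause $x = y$) and symmetric (since $\G$ is undirected), while not necessarily transitive; it is clearly computable in polynomial time from $\G$.

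With this construction, an ordered pair $(t_u, t_v)$ with $u \neq v$ is a counterexample to $B \imp A$ exactly when $\phi_A(a_u, a_v) = \ctt{false}$, that is, exactly when $\{u, v\} \notin \E$. Because all predicates are symmetric, $\CG_{\Phi}(r, B \imp A)$ is the undirected graph on $\V$ whose edges are the non-edges of $\G$, namely $\overline{\G}$. A maximum independent set of $\overline{\G}$ is precisely a maximum clique of $\G$, so $g_3^{\Phi}(r, B \imp A) = 1 - \omega(\G)/\card{\V}$, where $\omega(\G)$ denotes the clique number. Setting $k = 1 - c/\card{\V}$, the constructed instance is a \ctt{yes} instance of \csmc{EVPP} if and only if $\omega(\G) \geq c$. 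Since the reduction is polynomial and all predicates are reflexive and symmetric, the theorem follows.

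The main obstacle here is conceptual rather than computational: the crux is recognising that dropping transitivity is exactly what frees the conflict-graph to realize an arbitrary undirected graph, after which the encoding and the threshold bookkeeping are routine. I would take particular care to specify $\phi_A$ on the whole domain $\dom(A)$ (not only on the values $a_v$ appearing in $r$) so that reflexivity and symmetry hold globally, and to confirm that the resulting oracle remains polynomial-time; these are the only points where the argument could slip.
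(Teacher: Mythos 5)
Your proposal is correct, but it takes a genuinely different route from the paper's reduction, even though both start from \csmc{Maximum Clique} and both ultimately arrange for the conflict-graph to be the complement $\overline{\G}$ (so that a maximum independent set of the conflict-graph is a maximum clique of $\G$). The difference is where the adjacency information of $\G$ lives. You trivialize the left-hand side (one common value on $B$, with $\phi_B$ equality) and store the entire graph inside the right-hand predicate, setting $\phi_A(a_u, a_v) = \ctt{true}$ iff $u = v$ or $\{u,v\} \in \E$; the predicate is thus instance-dependent, represented, say, by an $O(n^2)$ lookup table built from $\G$. The paper does the opposite: it uses only \emph{fixed, instance-independent} predicates --- $\phi_A$ is plain equality, and each left-hand attribute $B_\ell$ (one per edge $e_\ell$) carries a constant predicate on the three-element domain $\{0,1,2\}$ given by $\phi_\ell(x,y) = \ctt{true}$ iff $x = y$ or $x + y < 3$ --- and encodes $\G$ entirely in the tuple values, so that the tuples for $u_i, u_j$ agree on all of $X = \{B_1, \dots, B_m\}$ exactly when $(u_i, u_j)$ is a non-edge, while distinct $A$-values make every agreeing pair a counterexample. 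Your version buys brevity and transparency: a two-attribute scheme and an immediate identification of the conflict-graph with $\overline{\G}$, which isolates the conceptual point that dropping transitivity lets a reflexive symmetric predicate realize an arbitrary graph. The paper's version buys a somewhat stronger message: hardness persists even when the predicates are simple fixed comparison rules with constant-size per-attribute domains, so the intractability genuinely resides in the data rather than in predicate complexity --- a distinction that matters for the paper's motivating scenario where domain experts choose natural, fixed predicates. Both arguments are valid under the paper's model, since the relation scheme with predicates $(\R, \Phi)$ is part of the \csmc{EVPP} input and your $\phi_A$ is a polynomial-time oracle; just be aware that your construction would not survive a restriction to instance-independent predicate families, whereas the paper's would. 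Your \NP-membership argument is identical to the paper's.
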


\begin{proof}
We first show that \csmc{EVPP} belongs to \NP.
Let $(\R, \Phi)$ be a relation scheme with predicates, $r$ a relation over $(\R, \Phi)$, $X \imp 
A$ a functional dependency over $(\R, \Phi)$ and $k \in \cb{R}$.
We have that $g_3^{\Phi}(X \imp A, r) \leq k$ if and only if there exists a subrelation 
$s$ in $r$ satisfying $s \simmodels X \imp A$ and $1 - \frac{\card{s}}{\card{r}} \leq 
k$, or $\card{s} \geq (1 - k) \times \card{r}$ equivalently.
Therefore, a certificate for \csmc{EVPP} is a subrelation $s$ containing at least $(1 - 
k) \times \card{r}$ tuples and satisfying $X \imp A$ (with respect to $\Phi$).
Since predicates can be computed in polynomial time by assumption, it takes polynomial 
time to check that $s \simmodels X \imp A$.
Thus, \csmc{EVPP} belongs to \NP.

To show \NP-completeness, it is convenient to use a reduction from Maximum Clique (\csmc{MC}) rather than \csmc{MIS}, even though the problems are polynomially equivalent:
\Problem{Maximum Clique (\csmc{MC})}
{A graph $\G = (\V, \E)$, $k \in \cb{N}$.}
{\ctt{yes} if $\G$ has a clique with at least $k$ vertices.} 
Let $\G = (V, \E)$ be a graph with $V = \{u_1, \dots, u_n\}$ for some $n \in \cb{N}$, and $\E = \{e_1, \dots, e_m\}$ for some 
$m \in \cb{N}$.
Let $k$ be an integer such that $k \leq \card{V}$.
We construct an instance of \csmc{EVPP}.
We begin with a relation scheme with predicates $(\R, \Phi)$ where $\R = \{B_1, \dots, 
B_m, A\}$, $\Phi = \{\phi_1, \dots, \phi_m, \phi_A\}$ , and:
\begin{itemize}
\item for each $1 \leq i \leq m$, $\dom(B_i) = \{0, 1, 2\}$ and $\phi_i$ is defined as 
follows:
\[ 
\phi_i(x, y) = \begin{cases}
    \ctt{true} & \text{if } x = y \text{ or } x + y < 3 \\
    \ctt{false} & \text{otherwise.}
\end{cases}
\]
Observe that $\phi_i$ is reflexive and symmetric.
\item $\dom(A) = \{1, \dots, n\}$, and the predicate $\phi_A$ for $A$ is defined by 
$\phi_A(x, y) = \ctt{true}$ if and only if $x = y$.
Thus, $\phi_A$ is reflexive and symmetric.
\end{itemize}
Observe that the predicates can be computed in polynomial time in 
the size of their input.
Now, we build a relation $r = \{t_1, \dots, t_n\}$ (one tuple per vertex in $\G$) over 
$(\R, \Phi)$.
For each $1 \leq i \leq n$, we put $t_i[A] = i$ and for each $1 \leq j \leq m$:
\[ 
    t_i[B_j] = 
    \begin{cases}
        0 & \text{if } u_i \notin e_j \\
        1 & \text{if } e_j = (u_i, u_{\ell}) \text{ and } i < \ell \\
        2 & \text{if } e_j = (u_{\ell}, u_i) \text{ and } \ell < i \\
    \end{cases}
\]
Finally, let $k' = 1 - \frac{k}{n}$, and consider the functional dependency $X \imp A$ 
where $X = \{B_1, \dots, B_m\}$.
We obtain an instance of \csmc{EVPP} which can be constructed in polynomial time 
in the size of $\G$.
The reduction is illustrated on an example in Figure \ref{fig:vpp-tra}.
\begin{figure}[ht!]
    \centering 
    \includegraphics[scale=1]{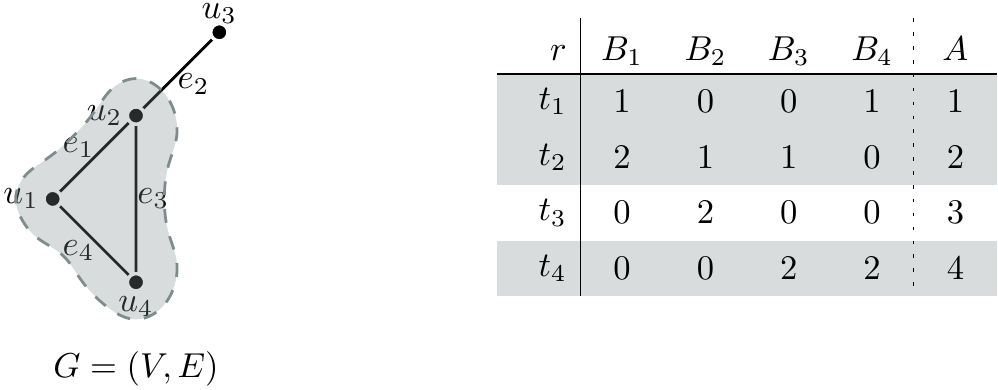}%
    \caption{Reduction of Theorem \ref{thm:vpp-tra}. In grey, a clique and its 
    associated subrelation satisfying $X \imp A$.}
    \label{fig:vpp-tra}
\end{figure}

To conclude the proof, we have to prove that $\G$ contains a clique $K$ such that 
$\card{K} \geq k$ if and only if $g_3^{\Phi}(X 
\imp A, r) \leq k'$. 
To do so, we show that for every distinct tuples $t_i, t_j$ of 
$r$, $\bigwedge_{1 \leq \ell \leq m} \phi_{\ell}(t_i[B_{\ell}], \allowbreak t_j[B_{\ell}]) = \ctt{true}$ if 
and only 
if $(u_i, u_j)$ is not an edge of $\G$.

We begin with the only if part.
Hence, assume that for each $1 \leq \ell \leq m$, we have $\phi_{\ell}(t_i[B_{\ell}], t_j[B_{\ell}]) = 
\ctt{true}$.
By definition of $\phi_{\ell}$, we have two cases:
\begin{itemize}
    \item $t_i[B_{\ell}] = t_j[B_{\ell}]$.
    By construction of $r$, it follows that $t_i[B_{\ell}] = 0$.
    Hence, neither $u_i$ nor $u_j$ belongs to $e_{\ell}$.
    
    \item $t_i[B_{\ell}] + t_j[B_{\ell}] < 3$.
    It follows that either $t_i[B_{\ell}] = 0$ or $t_j[B_{\ell}] = 0$.
    Without loss of generality, assume that $t_i[B_{\ell}] = 0$.
    Then, again by construction of $r$, we deduce that $u_i \notin e_{\ell}$. 
\end{itemize}
Thus, $(u_i, u_j)$ is not an edge of $\G$.

We move to the if part.
We use contrapositive.
Hence, assume there exists some $B_{\ell}$, $1 \leq \ell \leq m$, such that $\phi_{\ell}(t_i[B_{\ell}],  
t_j[B_{\ell}]) = \ctt{false}$.
By definition of $\phi_{\ell}$, we deduce that $t_i[B_{\ell}] \neq t_j[B_{\ell}]$ and $t_i[B_{\ell}] + 
t_j[B_{\ell}] \geq 3$.
Without loss of generality, we obtain $t_i[B_{\ell}] = 1$ and $t_j[B_{\ell}] = 2$.
Therefore, by construction of $r$, $u_i \in e_{\ell}$  and $u_j \in e_{\ell}$ must hold.
As $t_i[B_{\ell}] \neq t_j[B_{\ell}]$, we deduce that $(u_i, u_j) = e_{\ell}$, concluding this part of the proof.

Consequently, a subset $K$ of $\V$ is a clique in 
$\G$ if and only if the corresponding set of tuples $s(K)$ is a subrelation of $r$ 
which satisfies $X \imp A$.
Therefore, $\G$ contains a clique $K$ such that 
$\card{K} \geq k$ if and only if $g_3^{\Phi}(X \imp A, r) \leq k'$ holds, which concludes 
the proof. \end{proof}

\section{Proof of Theorem \ref{thm:vpp-sym}}
\label{app:vpp-sym}

\begin{theorem*}[\ref{thm:vpp-sym}]
The problem \csmc{EVPP} is \NP-complete even when the predicates used on each 
attribute are transitive (\ctt{tra}), reflexive (\ctt{ref}), and antisymmetric (\ctt{asym}).
\end{theorem*}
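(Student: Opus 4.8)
The plan is to first establish membership in \NP\ exactly as in the proof of Theorem \ref{thm:vpp-tra}: a subrelation $s$ with $s \simmodels X \imp A$ and $\card{s} \geq (1-k)\,\card{r}$ is a polynomial-size certificate, verifiable in polynomial time since the predicates are polynomial-time oracles. For hardness I would reduce from \csmc{MIS} restricted to $2$-subdivision graphs, which is \NP-hard \cite{poljak1974note}. Given a graph $G$, let $H = S_2(G)$ be obtained by replacing each edge $\{a,b\}$ of $G$ by a path $a - q_{ab} - s_{ab} - b$ on two fresh subdivision vertices, and create one tuple $t_v$ per vertex $v$ of $H$. I would then design the instance so that the undirected graph underlying $\CG_{\Phi}(r, X \imp A)$ is exactly $H$.

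The core idea is to encode $H$ with two partial orders. Since every predicate must be reflexive, transitive and antisymmetric, both the compatibility relation on $X$ and $\phi_A$ are partial orders, and I would realize each by a single attribute carrying pairwise distinct values: take $X = \{C\}$ with $\phi_C$ equal to a partial order $\preceq$ on $V(H)$, and $\phi_A$ equal to a partial order $\leq_A$ on $V(H)$. A short preliminary observation then drives the reduction: if $t_u, t_v$ are $\preceq$-incomparable, then neither $(t_u,t_v)$ nor $(t_v,t_u)$ is a counterexample, so $\{u,v\}$ is a non-edge; and if $u \prec v$, the only candidate counterexample is $(t_u,t_v)$, which is present iff $\phi_A(t_u[A],t_v[A]) = \ctt{false}$, i.e. iff $t_u[A] \not\leq_A t_v[A]$. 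Hence the underlying undirected graph of $\CG_{\Phi}(r, X \imp A)$ consists precisely of the $\preceq$-comparable pairs $u \prec v$ with $t_u[A] \not\leq_A t_v[A]$.

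I would then set $\preceq$ so that the original endpoints of each gadget are $\preceq$-minimal, with $a \prec q_{ab} \prec s_{ab}$ and $b \prec s_{ab}$, leaving subdivision vertices of distinct gadgets incomparable; and take $\leq_A$ to be generated by the single relation $a \leq_A s_{ab}$ per edge (fixing $a$ as a distinguished endpoint). The verification splits into showing that (i) every edge of $H$ survives, i.e. is never forced into $\leq_A$, and (ii) every non-edge of $H$ is either $\preceq$-incomparable or lies in $\leq_A$. Once this is done, independent sets of $\CG_{\Phi}(r, X \imp A)$ coincide with satisfying subrelations, so putting $k' = 1 - k/\card{V(H)}$ gives $g_3^{\Phi}(r, X \imp A) \leq k'$ iff $H$ has an independent set of size at least $k$.

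\textbf{The main obstacle}, and the reason $2$-subdivisions appear, is the transitivity of $\phi_A$: a chain of forced $\leq_A$-relations could either suppress a genuine edge of $H$ or, through a vertex shared by two gadgets, manufacture a spurious conflict edge between distinct gadgets. The $2$-subdivision neutralises exactly this. Each subdivision vertex has degree $2$ with both neighbours inside its own path, so distinct gadgets are glued only at the $\preceq$-minimal original vertices; this keeps every cross-gadget pair $\preceq$-incomparable, hence a non-edge for free. Furthermore, in the generating relation of $\leq_A$ the left-hand sides are original vertices and the right-hand sides are subdivision vertices, so the two sets are disjoint, there are no chains, and $\leq_A$ equals its own generating relation. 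Combined with $\leq_A \subseteq \nobreak \preceq$, this guarantees that no edge pair of $H$ can be killed, while the single generator per gadget kills the unique comparable non-edge $\{a, s_{ab}\}$; I expect the bulk of the write-up to consist of formalising these two facts.
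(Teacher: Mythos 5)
Your proposal is correct and takes essentially the same route as the paper: membership in \NP\ via the same certificate argument, and hardness by a reduction from \csmc{MIS} on $2$-subdivision graphs engineered so that the underlying undirected conflict graph is exactly the subdivision $S_2(G)$, with the consequent predicate used to cancel the one extra compatibility per gadget that transitivity of the left-hand predicate forces. The only difference is in the concrete gadget: the paper realizes this with a symbol-pair predicate on $B$ and plain equality on $A$ (assigning each subdivision tuple the \emph{opposite} endpoint as its $A$-value), whereas you use two explicit partial orders on the vertex set with the generator $a \leq_A s_{ab}$ killing the forced pair $(a, s_{ab})$ --- your case analysis checks out and yields the same conflict graph.
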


\begin{proof}
The fact that \csmc{EVPP} belongs to $\NP$ has been shown in Theorem \ref{thm:vpp-tra}.

To show \NP-completeness, we use a reduction from \csmc{MIS} in $2$-subdivision graphs, in which \csmc{MIS} remains \NP-complete \cite{poljak1974note}.
Let $\G = (V, \E)$ be an (undirected) graph where $V = \{u_1, \dots, u_n\}$ and $\E = 
\{e_1, \dots, e_m\}$.
Without loss of generality, we assume that $\G$ is loopless and that each vertex belongs 
to at least one edge.
Let $\V_2 = \V \cup \{v_k^i \mid 1 \leq k \leq m, 1 \leq i 
\leq n \text{ and } u_i \in e_k\}$ be a new set of vertices.
We construct a set $\E_2$ of edges.
It is obtained from $\E$ by replacing each edge $e_k = (u_i, 
u_j)$ by a path made of three edges $\{(u_i, v_k^i), (v_k^i, v_k^j), (v_k^j, u_j)\}$.
The graph $\G_2 = (\V_2, \E_2)$ is the $2$-subdivision of $\G$.
Every $2$-subdivision graph is the $2$-subdivision of some graph.
The graph $\G_2$ can be built in polynomial time in the size of $\G$.

Now we construct an instance of \csmc{EVPP}.
Let $\{a_1, \dots, a_n\}$ be a set of characters.
We build a relation scheme with predicates $(\R, \Phi)$ where $\R = \{B, A\}$, 
$\Phi = \{\phi_B, \phi_A\}$, and:
\begin{itemize}
\item $\dom(B)$ is the set of pairs of symbols associated to $\{a_1, \dots, 
a_n\} \times \{a_1, \dots, a_n\}$.
We add a predicate $\phi_B$ as follows:
\[
\phi_B(x, y) = \begin{cases}
\ctt{true} & \text{if } x = y \\
\ctt{true} & \text{if } x \neq y \text{ and } x[1] = x[2] \text{ and } x[1] \in \{y[1], 
y[2]\} \\
\ctt{false} & \text{otherwise.}
\end{cases}
\]
The predicate is \emph{reflexive} by definition.
We prove that it is \emph{transitive}.
Let $x, y, z \in \dom(B)$ and assume that $\phi_B(x, y) = \phi_B(y, z) = \ctt{true}$.
If $x = y = z$, we readily have $\phi_B(x, z) = \ctt{true}$.
Since $x \neq z$ implies $x \neq y$ or $y \neq z$, it is sufficient to show that $\phi(x, 
z) = \ctt{true}$ in these two cases.
Assume first that $x \neq y$.
Then $\phi_B(x, y) = \ctt{true}$ if and only if $x = a_i a_i$ and $y \in \{a_ia_j, 
a_ja_i\}$ for $1 \leq i, j \leq n$, $i \neq j$.
It follows that $\phi_B(y, z)$ holds if and only if $z = y$. 
Thus, $\phi_B(x, z) = \phi_B(x, y) = \ctt{true}$ is valid.
Let us assume now that $y \neq z$.
Then, $\phi_B(y, z) = \ctt{true}$ implies that $y = a_i a_i$ for some $1 \leq i \leq n$, 
by definition of $\phi_B$.
Therefore, $\phi_B(x, y) = \ctt{true}$ entails $x = y$.
We deduce $\phi_B(x, z) = \ctt{true}$. 
Consequently, $\phi_B$ is transitive.
At last, assume that $\phi_B(x, y) = \ctt{true}$ with $x \neq y$.
Hence, $y[1] \neq y[2]$ and $\phi_B(y, x)$ cannot be true.
Therefore, $\phi_B(x, y) = \phi_B(y, x) = \ctt{true}$ entails $x = y$.
Thus, $\phi_B$ is also \emph{antisymmetric}.

\item $\dom(A) = \{1, \dots, n\}$ and $\phi_A(x, y) = \ctt{true}$ if and only if $x = y$.
In other words, $\phi_A$ is the usual equality.
Hence, it enjoys both reflexivity, transitivity and antisymmetry.
\end{itemize}
Observe that all predicates can be computed in polynomial time in the size of 
their input.
Now we construct a relation $r = \{t_1, \dots, t_n\} \cup \{t_k^i \mid 1 \leq k \leq m, 1 
\leq i \leq n, v_k^i \in V_2\}$ (one tuple per vertex in $\G_2$) over $(\R, \Phi)$:
\begin{itemize}
\item for each $1 \leq i \leq n$, $t_i[B] = a_i a_i$ and $t_i[A] 
= i$,
\item for each $1 \leq k \leq m$ and each $1 \leq i \leq n$ such that $v_k^i \in V_2$, 
let $e_k = (u_i, u_j)$, $1 \leq j \leq n$, be the corresponding edge of $\G$.
Then, we put $t_k^i[B] = a_i a_j$ if $i < j$ and $a_j a_i$ otherwise.
As for $A$, we define $t_k^i[A] = j$.
\end{itemize}
Finally, we consider the functional dependency $B \imp A$.
The whole reduction can be computed in polynomial time in the size of $\G$.
It is illustrated on an example in Figure \ref{fig:vpp-sym}.
Intuitively, $\phi_B$ guarantees that two tuples representing adjacent vertices of $\G_2$ will agree on $B$ in $(\R, \Phi)$.
    However, the transitivity of $\phi_B$ will produce pairs of tuples which agree on $B$ even though they are not adjacent in $\G_2$.
    More precisely, $\phi_B$ returns $\ctt{true}$ in two cases:
    \begin{itemize}
        \item  when it compares $t_i$ to $t_k^i$ and $t_k^j$ for each edge $e_k$ of $\G$ to which 
        $u_i$ belongs, and
        \item when it compares $t_k^i$ to $t_k^j$ for each edge $e_k$ of $\G$.
    \end{itemize}
    The role of $\Phi_A$ is then to assert that non-adjacent tuples cannot produce counterexamples.

\begin{figure}[ht!]
\centering
\includegraphics[scale=1.0, page=3]{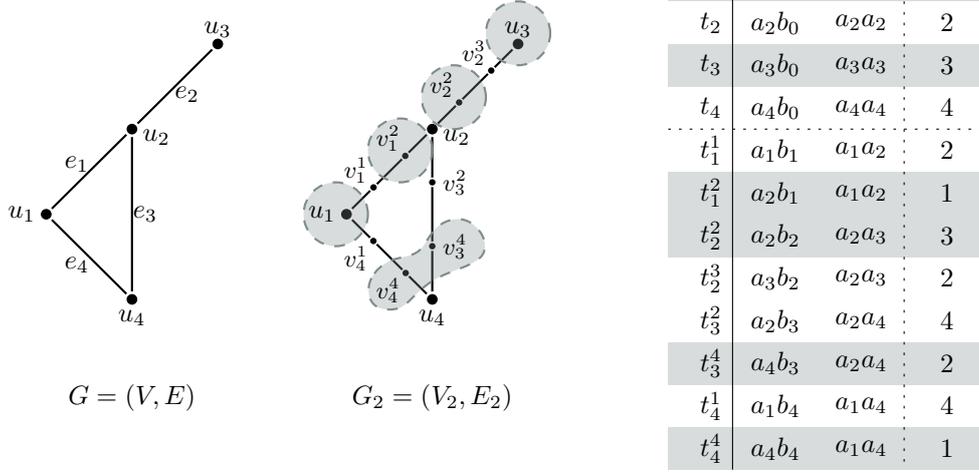}%
\caption{Reduction of Theorem \ref{thm:vpp-sym}. In grey, an independent set and its 
associated subrelation satisfying $B \imp A$.}
\label{fig:vpp-sym}
\end{figure}

We begin with the if part.
Consider two (distinct) vertices of $V_2$ that are connected in $\G_2$.
Because $\G_2$ is the $2$-subdivision of $\G$, we have the following cases:
\begin{itemize}
\item $u_i, v_k^i$ for some $1 \leq i \leq n$ and $1 \leq k \leq m$.
For $B$, we have $t_i[B] = a_i a_i$ and $t_k^i[B] = a_i a_j$ (or $a_j a_i$) for 
some $1 \leq j \leq n$.
Therefore, $\phi_B(t_i[B], t_k^i[B]) = \ctt{true}$ holds.
However, $t_i[A] \neq t_k^i[A]$ also by definition of $r$.
Thus, $\{t_i, t_k^i\} \not\simmodels B \imp A$.
    
\item $v_k^i, v_k^j$ for some $1 \leq i < j \leq n$ (without loss of generality) and $1 
\leq k \leq m$.
Then, $t_k^i[B] = t_k^j[B]$, $t_k^i[A] = j$, and $t_k^j[A] = i$.
It follows that $\{t_k^j, t_k^i\} \not\simmodels B \imp A$, by definition of $\phi_B$ and 
$\phi_A$.
\end{itemize}
Thus, if two vertices are connected in $\G_2$, the corresponding tuples in $r$ do not 
satisfy the 
functional dependency $B \imp A$, concluding this part of the proof.

We show the only if part using contrapositive.
Consider two distinct vertices of $V_2$ that are not connected in $\G_2$.
We have four cases:
\begin{itemize}
\item $u_i, u_j$ for some $1 \leq i < j \leq n$.
By definition of $r$, we have $t_i[B]= a_i a_i$ and $t_j[B] = a_j a_j$.
Thus, $\phi_B(t_i[B], t_j[B]) = \phi_B(t_j[B], t_i[B]) \allowbreak = \ctt{false}$, and 
$\{t_i, t_j\} \simmodels B \imp A$ holds.

\item $v_k^i, v_{\l}^{j}$ for some $1 \leq k, \l \leq m$ and $1 \leq i, j \leq n$.
Then, $t_k^i[B][1] \neq t_k^i[B][2]$ and $t_{\l}^{j}[B][1] \neq t_{\l}^{j}[B][2]$. 
According to $\G_2$, $v_k^i$ and $v_{\l}^{j}$ are not connected if and only if $k \neq 
\l$.
Consequently, $t_k^i[B] \neq t_{\l}^j[B]$ by definition of $r$.
Hence, $\phi_B(t_k^i[B], t_{\l}^{j}[B]) = \phi_B(t_{\l}^{j}[B], t_k^i[B]) = 
\ctt{false}$.
We deduce that $\{t_k^i, t_{\l}^{j}\} \simmodels B \imp A$.

\item $u_i, v_k^j$ for some $1 \leq i, j \leq n$, $1 \leq k \leq m$ and $u_i \notin e_k$ 
in $\G$.
Then, $t_i[B] = a_i a_i$ and since $u_i \notin e_k$, we have $t_k^j[B] = a_j a_{\l}$ 
(or $a_{\l} a_j$) for some $1 \leq \l \leq n$ and
$i \neq j, \l$.
By definition of $\phi_B$, we deduce that $\phi_B(t_i[B], t_k^j[B]) = 
\phi_B(t_k^j[B], t_i[B]) = \ctt{false}$ must hold.
Therefore, $\{t_i, t_j^k\} \simmodels B \imp A$ is true too.

\item $u_i, v_k^j$ for some $1 \leq i, j \leq n$, $1 \leq k \leq m$ and $u_i \in e_k$ in 
$\G$.
Then, necessarily $i \neq j$ by construction of $\G_2$.
Consequently, we must have $t_i[A] = t_j^k[A] = i$ by definition of $r$.
Therefore, $\phi_A(t_i[A], t_k^j[A]) = \ctt{true}$ and $\{t_i, t_k^j\} \simmodels B 
\imp 
A$ holds.
\end{itemize}

Thus, whenever two vertices of $\G_2$ are disconnected, the corresponding set of tuples 
of $r$ satisfies $B \imp A$.
This concludes the proof of the equivalence.

Consequently, $\G_2$ has an independent set of size $k$ if and only if there exists a 
subrelation $s$ of $r$ of size $k$ which satisfies $B \imp A$, 
concluding the proof.
\end{proof}

\end{document}